\documentclass[final,,twoside]{IEEEtranTCOM}
\normalsize
\ifCLASSINFOpdf
\else
\fi
\usepackage{fancyhdr}
\usepackage{amsmath,amssymb}
\usepackage{amsfonts}
\usepackage{amsthm}
\usepackage{graphicx}
\usepackage{dsfont}
\usepackage{balance}
\usepackage{algpseudocode}
\usepackage{algorithm}

\usepackage{comment}

\usepackage{hyperref}    

\usepackage{color}
\usepackage{pgfplots}           
\pgfplotsset{compat=1.16}
\usepackage{subfigure}
\usepackage{caption}

\usepackage{enumerate}
\usepackage{graphics}
\usepackage{cite}
\usepackage{mathrsfs}
\usepackage{stfloats}
\usepackage{xfrac}

\usepackage{tikz}
\usepackage{mathdots}
\usepackage{yhmath}

\usepackage{array}
\usepackage{multirow}
\usepackage{textcomp}
\usepackage{gensymb}
\usepackage{tabularx}
\usepackage{booktabs}
\usepackage{graphicx}

\usepackage{listings}
\usepackage{xcolor}

\definecolor{codebg}{RGB}{248,248,248}
\definecolor{codeframe}{RGB}{180,180,180}

\usepackage{bigints}

\newtheorem{theorem}{Theorem}

\newtheorem{proposition}{Proposition}
\newtheorem{remark}{Remark}

\begin{document}

\title{\huge
Reliable ORIS-assisted FSO Communications via HARQ}

\author{Georgios D. Chondrogiannis,~\IEEEmembership{Graduate Student Member,~IEEE,}  Athanasios P. Chrysologou,~\IEEEmembership{Member,~IEEE}, Vasilis K. Papanikolaou,~\IEEEmembership{Member,~IEEE,} Alexandros-Apostolos A. Boulogeorgos,~\IEEEmembership{Senior Member,~IEEE,} \\ Nestor D. Chatzidiamantis,~\IEEEmembership{Member,~IEEE,} and Robert Schober,~\IEEEmembership{Fellow,~IEEE}

\thanks{G. D. Chondrogiannis and N. D. Chatzidiamantis are with the Department of Electrical and Computer Engineering, Aristotle University of Thessaloniki, 54124 Thessaloniki, Greece (e-mails:
\{gchondro, nestoras\}@auth.gr).}
\thanks{A. P. Chrysologou and A.-A. A. Boulogeorgos are  with the Department of Electrical and Computer Engineering, Democritus University of Thrace, 67100 Xanthi, Greece (e-mails: achrysol@ee.duth.gr, al.boulogeorgos@ieee.org).}

\thanks{V. K. Papanikolaou and R. Schober are with the Institute for Digital Communications (IDC), Friedrich-Alexander-University Erlangen-Nuremberg, 91058 Erlangen, Germany (e-mails: \{vasilis.papanikolaou, robert.schober\}@fau.de).}

}


\vspace{-0.3in}
\maketitle	

\begin{abstract}
This paper studies a free-space optical (FSO) link assisted by an optical reconfigurable intelligent surface (ORIS) and enhanced by a hybrid automatic repeat request (HARQ) scheme. The ORIS creates a virtual line-of-sight path around obstacles, while HARQ recovers frames corrupted by turbulence, pointing jitter, and geometric loss through retransmission and combining. We first derive a tractable statistical model for the end-to-end transmitter-ORIS-receiver (Tx–ORIS–Rx) reflected channel by jointly accounting for atmospheric turbulence, ORIS-induced pointing errors, and geometric attenuation. Building on these results, we obtain closed-form outage probability (OP) expressions for HARQ with Chase combining (HARQ-CC) and analytical outage upper bounds for HARQ with incremental redundancy (HARQ-IR), valid for an arbitrary maximum number of transmission rounds. We further conduct a high signal-to-noise ratio (SNR) analysis that provides a thorough characterization of the outage behavior and reveals the diversity order of both schemes. In addition, we characterize the delay behavior of the truncated HARQ process through the mean number of transmission rounds and the conditional mean number of rounds given successful decoding. Finally, numerical and Monte Carlo results validate the proposed analysis and show that HARQ substantially improves ORIS-assisted FSO reliability, with HARQ-IR achieving lower outage and delay than HARQ-CC, even for a small number of retransmission rounds.
\end{abstract}

\begin{IEEEkeywords}
Hybrid Automatic Repeat Request (HARQ), Optical Reconfigurable Intelligent Surfaces (ORIS), Free Space Optics (FSO), Outage Probability (OP), Performance Analysis.
\end{IEEEkeywords}

\vspace{-0.2cm}
\section{Introduction} \label{S:Intro}


Free-space optical (FSO) communication has attracted significant interest as a high-capacity wireless technology due to its license-free operation, inherent directionality, security advantages, and ability to support high-data-rate transmission over moderate distances \cite{FSO_theory}. These features make FSO suitable for a wide range of terrestrial and space-oriented applications, including inter-building links, smart-city infrastructure, temporary high-capacity deployments, urban backhaul, and satellite-to-ground communications \cite{FSO_theory_2,Haas_Survey,FSO_app}. However, the performance and deployment flexibility of FSO systems are strongly constrained by the requirement of a clear line-of-sight (LoS) path between the optical transmitter (Tx) and receiver (Rx). In practical environments, this LoS path may be blocked or degraded by obstacles, atmospheric attenuation, turbulence-induced scintillation, and pointing errors, leading to severe fluctuations of the received optical power.

To overcome the LoS limitation, optical reconfigurable intelligent surfaces (ORISs) have recently emerged as a promising solution for FSO systems \cite{nature_1,nature_2}. By steering or reshaping the incident optical beam, an ORIS can establish a virtual LoS path between the Tx and Rx when the direct path is blocked. Unlike active relays, ORISs can be implemented using passive or nearly passive optical components, which makes them attractive for low-complexity deployment in obstructed environments \cite{schober_survey}. Depending on their implementation, ORISs can be broadly classified into mirror-array-based and metasurface-based designs. Mirror-array ORISs employ two-dimensional arrays of adjustable micro-mirrors to control the direction of the reflected beam \cite{Magaz,schober_survey}, whereas metasurface-based ORISs use subwavelength engineered elements to impose a programmable phase profile on the incident wavefront. This enables beam steering, wavefront shaping, and anomalous reflection according to the generalized reflection law \cite{roadmap,Optical_adaptive,elec_tun,beam_splitter}.

Recent works such as \cite{Alou_ORIS, Unified} model the metasurface as a discrete structure composed of discrete fading reflective elements, utilizing a statistical framework similar to the zero-boresight pointing error model. In particular, the authors of \cite{Alou_ORIS} derive compact approximate closed-form expressions for ORIS-assisted system performance by leveraging the central limit theorem, whereas the authors of \cite{Unified} present comprehensive analytical expressions applicable to both single and cascaded ORIS setups under various fading scenarios.
Conversely, studies such as \cite{Dobre_Haas, photonics_1, boulo_fso, Najafi_new, optics_express, ORIS-NOMA} model the reflecting surface as a continuous structure acting as an anomalous reflector. For instance, the authors of \cite{Dobre_Haas} apply a zero-boresight pointing error model while offering a detailed characterization of the receiver-side beam. Similarly, the authors of \cite{photonics_1} adopt the same model with a simplified treatment of beam displacement. The work in \cite{boulo_fso} assesses the performance of a cascaded multi-RIS FSO link assuming zero-boresight alignment at each stage. In \cite{Najafi_new}, the authors perform a statistical analysis of receiver-side pointing errors and model the end-to-end channel as a unified coefficient, also accounting for structural sway effects. Their proposed framework supports both 2D and 3D scenarios. Finally, the authors of\cite{optics_express} address outage-constrained transmission and power optimization utilizing the 2D pointing error model developed in \cite{Najafi_new}, while the authors of \cite{ORIS-NOMA} extend this framework to multiple-access scenarios.

Despite their ability to restore the LoS, ORIS-aided links remain vulnerable to random impairments such as atmospheric turbulence and pointing errors. These random fades translate to deep, rapidly varying signal-to-noise ratio (SNR) fluctuations that undermine the ultra-reliable, low-latency requirements of future 6G optical networks. In addition, the presence of the ORIS introduces an additional source of misalignment through surface or platform sway, which can intensify pointing losses. Hybrid automatic repeat request (HARQ) has been widely studied as an effective retransmission-based mechanism for improving the reliability of fading channels. In HARQ with chase combining (HARQ-CC), the same coded packet is retransmitted and combined with previous failed attempts, whereas in HARQ with incremental redundancy (HARQ-IR), each retransmission conveys additional parity information. In the context of FSO communications, HARQ schemes have been analyzed under turbulence-induced fading and pointing errors in \cite{HARQ_FSO} and \cite{improv}, while power allocation across HARQ rounds was investigated in \cite{Power-optimal_HARQ} using outage asymptotic analysis. HARQ has also been considered in several integrated optical/wireless architectures, e.g.,\cite{harq-fso-vehicular,HARQ_FSO_UAV}. However, to the best of the authors’ knowledge, HARQ-aided ORIS-assisted FSO links have not yet been analytically characterized. Moreover, the benefit of HARQ in ORIS-assisted FSO links is not obvious a priori, because while retransmissions improve reliability, they may also increase delay, and the dominant impairment may switch between turbulence and pointing errors.

Motivated by the aforementioned considerations, this work
investigates the performance of a HARQ-aided ORIS-assisted
FSO link. In contrast to ORIS models that represent
the reflected path as a cascade of independent fading hops or
as a composition of independently fading reflecting elements,
we adopt the continuous-surface 2D model of \cite{Najafi_new}, where the
Tx-ORIS-Rx path is treated as a single end-to-end (e2e) reflected
beam. In this model, the pointing loss is governed by the compounded
beam-footprint displacement in the photodetector (PD) plane, jointly
determined by Tx, ORIS, and Rx sway, as well as by the
incidence, reflection, and projection angles. This distinction
is particularly relevant when ORIS sway is non-negligible,
since its effect cannot be equivalently captured by multiplying
independent pointing-error coefficients over separate hops.
Moreover, unlike cascaded models, the adopted e2e model enables the analytical study of ORIS geometry, because it explicitly incorporates
the incidence, reflection, and projection angles throughout the
pointing-error model.
Building on this e2e channel representation, for
the first time, we derive tractable channel statistics and analyze
HARQ-CC and HARQ-IR in terms of outage probability (OP),
high-SNR diversity order, and delay-oriented performance
metrics. This allows us to quantify both the reliability gain of retransmissions and the latency cost of the truncated HARQ process.
Specifically, the main contributions of our work can be summarized as follows:
\begin{itemize}
  \item We develop a tractable statistical framework for the ORIS-assisted FSO e2e channel, jointly incorporating atmospheric turbulence, path loss, and compounded Tx/ORIS/Rx pointing errors induced by structural sway.
  \item We derive approximate closed-form expressions for the probability density function (PDF) and cumulative distribution function (CDF) of the e2e channel coefficient and validate their accuracy through Monte Carlo simulations and Kolmogorov-Smirnov (KS) goodness-of-fit tests.
  \item Using the proposed channel statistics, we obtain closed-form outage expressions for HARQ-CC and analytical OP upper bounds for HARQ-IR for an arbitrary retransmission budget.
  \item We conduct a high-SNR analysis revealing useful insights and showing that both HARQ-CC and HARQ-IR achieve the same diversity order, while HARQ-IR offers a coding-gain advantage.
  \item We analyze the delay behavior of the truncated HARQ-IR and HARQ-CC processes through both unconditional and conditional mean-round metrics, thereby separating the average retransmission cost of all packet attempts from the delay experienced by successfully decoded packets.
\end{itemize}

\emph{Notations}: The absolute value, exponential, and natural logarithm functions are, respectively, denoted by
$|{\cdot}|, \exp{(\cdot)}, \text{
and } \ln (x)$. $\mathbb{P}(A)$ denotes the probability that event $A$ occurs, and $\mathbb{E}[\cdot]$ denotes the expected value of a random variable. The operators $\max{(\cdot)}$, $\min{(\cdot)}$ and $\text{med}(\cdot)$ represent the maximum, the minimum and the median of a numerical set, respectively. Also, $
\left\{p_i\right\}_1^N=\left\{p_1, \cdots, p_N\right\}$ is the shorthand notation for a finite sequence of $N$ elements. Moreover, $K_\nu(x)$ and $\Gamma(x)$ represent the $\nu$-th order modified Bessel function of the second kind \cite[eq. (8.432/1)]{Gradshteyn2014} and the Gamma function \cite[eq. (8.310/1)]{Gradshteyn2014}, respectively. Finally, $G_{p, q}^{m, n}\left(\! x \Big{|} \begin{array}{c}\mathbf{a_p} \\ \mathbf{b_q}\end{array}\!\!\right)$ stands for the Meijer $G$-function \cite[eq.(9.301)]{Gradshteyn2014} and $H_{p, q}^{m, n}\! \left[z \,\bigg{|} \!\!\begin{array}{c} \left(a_1, A_1\right), \ldots, \left(a_p, A_p\right) \\ \left(b_1, B_1\right), \ldots, \left(b_q, B_q\right)\end{array}\!\!\right]$ denotes the (single-variate) Fox-$H$ function \cite[sec. 1.2]{H-function_Mathai}. 
\begin{figure}[t]
\centering
	\includegraphics[scale = 0.21]{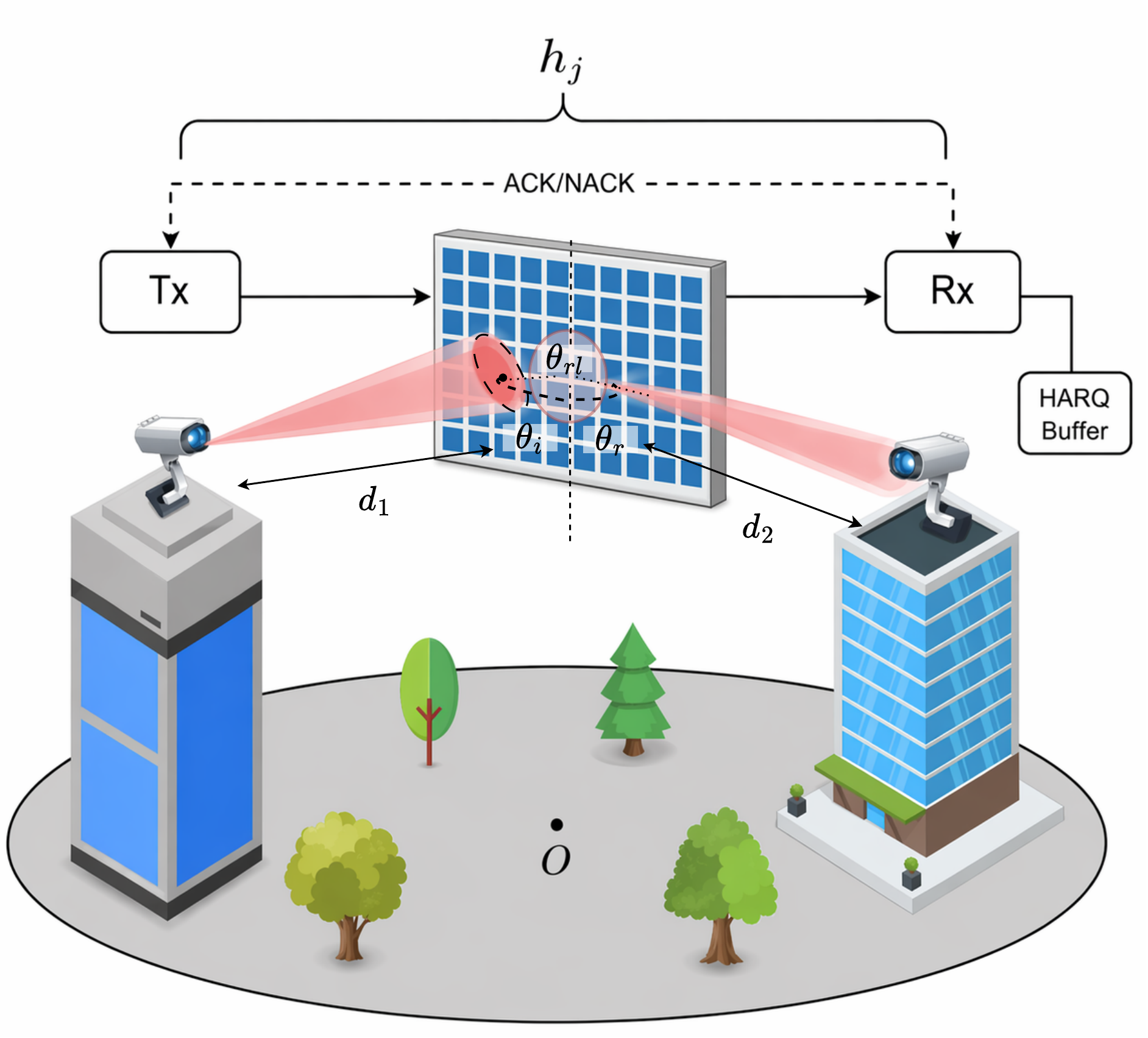}
 \caption{Illustration of ORIS-assisted FSO system with HARQ.}
	\label{Fig:draw}
\end{figure}
\section{System Model}
This section introduces a HARQ-aided ORIS-assisted FSO system model. We first provide the model for the intensity modulation direct detection (IM/DD) signal and the ORIS-assisted reflected optical path. Then, we present the statistical models for path-loss, Gamma–Gamma (GG) turbulence, and ORIS-induced pointing errors, which are later combined into a tractable e2e channel characterization.

\subsection{System and signal models}
As depicted in Fig. \ref{Fig:draw}, we consider an ORIS-assisted IM/DD FSO system which adopts HARQ to improve the reliability of the optical channel. In the present system,  the Tx encodes information by modulating the intensity of a laser beam, while the Rx employs a lens to focus the incoming optical signal onto a PD, which measures the signal's intensity to decode the transmitted data. The ORIS functions as a passive device for wavefront manipulation. By applying a configurable phase-shift distribution across its surface, it redirects the incoming optical beam toward a designated direction. The Tx, ORIS, and Rx are assumed to be installed on fixed structures, so that their nominal positions and the corresponding ORIS phase-shift profile remain unchanged over the considered transmission interval. Nevertheless, small random displacements caused by building sway are present around these nominal positions and are captured statistically through the adopted pointing-error model.

The signal at the optical receiver $y_m$ in the $m$-th HARQ round, for $m=1,2,\dots,M$, can be expressed as:
\begin{equation}
    y_m=  P x_mh_{m}+ n_m,
\end{equation}
where $x_m$ is the transmitted signal, $h_m$ is the e2e channel coefficient, $P$ is the optical transmit power of the transmitted signal, and $n_m$ is additive white gaussian noise (AWGN) with zero mean and variance $\sigma^2_{n}$.

\subsection{Channel and statistical models}
The e2e channel coefficient $h_m$ for the $m$-th HARQ round is considered to be the product of three factors \cite{Hralinovic,Najafi_new}: 
\begin{equation}
    h_m = h_l h_{am} h_{gm}.
    \label{ch_coefficient}
\end{equation}
Here, $h_l$ denotes the deterministic path-loss coefficient, $h_{gm}$ captures the pointing errors, and $h_{am}$ models the atmospheric turbulence at the Rx in the $m$-th round. Path-loss is assumed to remain constant over the considered HARQ transmission interval.
To simplify notation, we temporarily omit the HARQ-round index $m$ and write $h = h_l h_a h_g$. In what follows, we present the models adopted for $h_l$, $h_a$ and $h_g$.

\subsubsection{Atmospheric/Path-loss} The efficiency of practical reflecting surfaces can be compromised by the absorption or scattering of some of the incident beam's power. This phenomenon is dependent on operational parameters such as frequency and the bias voltage applied to the ORIS surface. To quantify path losses, the following equation is used
\begin{equation}
h_l=\zeta 10^{-\sigma d_{\mathrm{e2e}} / 10}.
\end{equation}
Here, $\zeta$ signifies the reflection efficiency, $\sigma$ denotes the attenuation coefficient, and $d_{\mathrm{e2e}}$ stands for the total e2e transmission distance of the optical beam. This distance is the sum of both the distance from the Tx to the ORIS, $d_1$, and from the ORIS to the receiver Rx, $d_2$, as illustrated in Fig. 1. For FSO links operating at a wavelength of $\lambda = 1550 \, \text{nm}$, typical reflection efficiency values for $\zeta$ lie between 0.7 and 1, as reported in \cite{Najafi_new}. The value of $\sigma$ depends on the current atmospheric conditions and can be estimated using Kim’s model described in \cite{Kim}.

\subsubsection{Fading}

Atmospheric turbulence causes the scintillation effect, resulting in signal fluctuations at the Rx. The GG model is widely acknowledged as an effective statistical model for describing various turbulence conditions, ranging from weak to strong. Its PDF is given as \cite{Dobre_Haas}
\begin{equation}
f_{h_a}(h_a) = \frac{2(\alpha \beta)^{\frac{\alpha+\beta}{2}}}{\Gamma(\alpha) \Gamma(\beta)} {h_a}^{\frac{\alpha+\beta}{2}-1} K_{\alpha-\beta}\left(2 \sqrt{\alpha \beta h_a}\right),
\label{f_ha}
\end{equation}
where $\alpha$ and $\beta$ are parameters related to large-scale and small-scale scattering effects, respectively. These parameters are given by:
\begin{align}
\alpha &= \left[\exp \left(\frac{0.49 \sigma_R^2}{\left(1+1.11 \sigma_R^{12 / 5}\right)^{7 / 6}}\right)-1\right]^{-1}, \nonumber \\
\beta &= \left[\exp \left(\frac{0.51 \sigma_R^2}{\left(1+0.69 \sigma_R^{12 / 5}\right)^{5 / 6}}\right)-1\right]^{-1},
\end{align}
where $\sigma_{R}^2 = 1.23 C_n^2 k^{7 / 6} d_{\mathrm{e2e}}^{11 / 6}$ represents the Rytov variance, $k = 2 \pi / \lambda$ is the wavenumber and $C_n^2$ is the refractive index structure parameter. For analytical tractability, we assume a block-fading model, where the channel remains approximately constant during each HARQ round, while consecutive retransmissions are treated as independent fading blocks. This assumption is justified when the packet duration is shorter than the coherence time of turbulence and pointing errors, and the feedback/retransmission process provides sufficient separation between rounds. In case of quasi-static fading, the combining/coding gain is still present but the time diversity across HARQ rounds is limited \cite{Power-optimal_HARQ}.

\subsubsection{Pointing Errors}
In ORIS-enabled links, the laser beam is directed toward the ORIS rather than directly to the Rx lens. Nevertheless, pointing errors can cause the center of the beam’s footprint on the ORIS to shift, resulting in alignment issues at the Rx plane. The extent of this misalignment is influenced by the phase-shifting characteristics or orientation of the ORIS, as discussed in \cite{schober_survey}.
When the positional fluctuations of the Tx, ORIS, and Rx are modeled as Gaussian, such as those caused by building sway, the PDF of the geometric loss component $h_g$ can be derived as \cite{Najafi_conf}
\begin{equation}
f_{h_g}\left(h_g\right)\!=\!\frac{\sqrt{\omega}}{ A_0 \sqrt{\pi}}\left[\ln \left(\frac{A_0}{h_g}\right)\right]^{-\frac{1}{2}}\!\left(\frac{h_g}{A_0}\right)^{\omega-1}, 
\end{equation}
where $0 \leq h_g \leq A_0$, with $A_0$ denoting the fraction of power collected at the center of the PD, which can be expressed as $A_0=\operatorname{erf}(\nu)$. Parameters $\omega$, $t$ and $\nu$ are given by $\omega = \frac{\tau \bar{w}^2\left(d_{e 2 e}\right)}{4 \sigma_{rx}^2}$, 
 $\tau=\frac{\sqrt{\pi} \operatorname{erf}(\nu)}{2 \nu \exp \left(-\nu^2\right) \cos ^2 \theta_{r p}}$, and $\nu=\frac{\sqrt{2} \cos \theta_{r p} a_l}{w\left(d_{e 2 e}\right)}$, respectively. Moreover, $\bar{w}\left(d_{\mathrm{e2e}}\right)$ is the beam width
at the Rx and $ \sigma_{rx}^2$ is the variance of the Gaussian misalignment vector on the PD plane, which is given by \cite{Najafi_conf}
\begin{equation}
    \sigma_{rx}^2=\frac{1}{\cos ^2 \theta_{r p}}\left(\frac{\cos ^2 \theta_r}{\cos ^2 \theta_i} \sigma_s^2+\frac{\sin ^2\left(\theta_i+\theta_r\right)}{\cos ^2 \theta_i} \sigma_r^2+\sigma_p^2\right).
\end{equation}
The variances $\sigma_s^2$, $\sigma_r^2$, and $\sigma_p^2$ characterize the building sways at the Tx, the ORIS and the Rx, respectively. Also, parameters $a_l$, $\theta_i$, $\theta_{r}$, and $\theta_{rp}$ represent, respectively, the radius of the PD, the angle of the incident beam at the ORIS, the angle of reflection between the laser beam and the ORIS, and finally the angle between the reflected beam and the PD plane. 

\vspace{-0.2cm}
\subsection{Combined channel model statistics }
\vspace{-0.1cm}
Toward enabling a comprehensive performance analysis of ORIS-assisted communication systems, the derivation of an analytical expression for the PDF of the e2e channel is crucial. In this context, the PDF of the e2e channel, characterized by the product of two independent random variables (RVs), \( h_a \) and \( h_g \), further scaled by the atmospheric attenuation factor \( h_l \), can be expressed as follows \cite{Hralinovic}:
\begin{equation}
    f_h\left(h\right) = \int_{\frac{h}{A_0 h_l}}^{\infty} \frac{1}{h_a h_l} f_{h_g}\left(\frac{h}{h_a h_l}\right)f_{h_a}\left(h_a\right)\mathrm{d}h_a.
    \label{f_h}
\end{equation}
In the following theorem, we present a closed-form approximate  expression for the PDF of the e2e channel.

\begin{theorem}
   The PDF of the e2e channel can be approximated as:
    \begin{align}
	f_{h}(h) &=\frac{\sqrt{\omega}\alpha\beta}{h_l A_0 \Gamma(\alpha) \Gamma(\beta)} \nonumber \\ 
    & \times \mathrm{G}_{1,3}^{3,0}\!\left(\!\alpha \beta \frac{h}{A_0 h_l} \! \mathrel{\Bigg|}  
    \begin{array}{c}
	   \!\!\! \omega - \frac{1}{2} \\ \!\!\! \alpha - 1,\, \beta - 1,\, \omega-1
	\end{array}\!\!\!\right),
    \label{eq:pdf}
\end{align}
where $\omega$ quantifies the severity of the pointing error effects in the ORIS-enabled link.
\end{theorem}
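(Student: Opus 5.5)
The plan is to work in the Mellin domain rather than evaluating the integral \eqref{f_h} directly. Because $h=h_l h_a h_g$ is a product of independent factors (one of them deterministic), the Mellin transform $\mathcal{M}_h(s)=\mathbb{E}[h^{s-1}]$ factorizes as $h_l^{s-1}\,\mathbb{E}[h_a^{s-1}]\,\mathbb{E}[h_g^{s-1}]$. I will compute each factor in closed form, recognise the product as the Mellin–Barnes integrand of a Meijer $G$-function after one approximation, and then invert.

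\emph{Step 1 (turbulence moments).} For the GG pdf \eqref{f_ha} I will use the standard result
\begin{equation*}
\mathbb{E}[h_a^{s-1}]=\frac{\Gamma(\alpha+s-1)\Gamma(\beta+s-1)}{\Gamma(\alpha)\Gamma(\beta)(\alpha\beta)^{s-1}} .
\end{equation*}
This follows, for example, from $\int_0^\infty x^{\mu-1}K_\nu(2\sqrt{x})\,\mathrm{d}x$ in \cite{Gradshteyn2014}.

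\emph{Step 2 (pointing-error moments).} In the pointing-error pdf, substitute $u=h_g/A_0$ and then $t=\ln(1/u)$. This gives
\begin{equation*}
\mathbb{E}[h_g^{s-1}]=\frac{\sqrt{\omega}A_0^{s-1}}{\sqrt{\pi}}\int_0^\infty t^{-1/2}e^{-(\omega+s-1)t}\,\mathrm{d}t=\sqrt{\omega}\,A_0^{s-1}(\omega+s-1)^{-1/2},
\end{equation*}
which is exact.

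\emph{Step 3 (the approximation, and the main obstacle).} The factor $(\omega+s-1)^{-1/2}$ is not a ratio of Gamma functions, so the product cannot yet be read as a Meijer $G$ integrand. This is where the word ``approximated'' in the theorem enters, and it is the step I expect to need justification. I will use the Gamma-ratio asymptotic
\begin{equation*}
x^{-1/2}\approx\frac{\Gamma(x)}{\Gamma(x+\tfrac12)}, \qquad x=\omega+s-1 .
\end{equation*}
This relation is accurate for large $|x|$ (Wendel/Stirling) and, in practice, for moderate $\omega$. Its accuracy would be argued qualitatively here and checked later via Monte Carlo and KS tests, as the paper announces.

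\emph{Step 4 (inversion).} Multiplying the three factors gives
\begin{equation*}
\mathcal{M}_h(s)\approx\frac{\sqrt{\omega}}{\Gamma(\alpha)\Gamma(\beta)}\,c^{1-s}\,\frac{\Gamma(\alpha-1+s)\Gamma(\beta-1+s)\Gamma(\omega-1+s)}{\Gamma(\omega-\tfrac12+s)}, \qquad c=\frac{\alpha\beta}{A_0h_l}.
\end{equation*}
Applying the inverse Mellin transform $f_h(h)=\frac{1}{2\pi i}\int\mathcal{M}_h(s)h^{-s}\,\mathrm{d}s$ and pulling out the constant $c$, the remaining integral is exactly the definition \cite[eq.~(9.301)]{Gradshteyn2014} of $\mathrm{G}_{1,3}^{3,0}$ at argument $ch$, with upper parameter $\omega-\frac12$ and lower parameters $\alpha-1,\beta-1,\omega-1$. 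The prefactor is $c\sqrt{\omega}/(\Gamma(\alpha)\Gamma(\beta))=\sqrt{\omega}\alpha\beta/(h_lA_0\Gamma(\alpha)\Gamma(\beta))$, which gives \eqref{eq:pdf}.

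It remains to check that the contour is admissible. It must separate the poles of the three Gamma functions, which is possible for $\operatorname{Re}s$ large enough. The $G^{3,0}_{1,3}$ function also correctly vanishes for $h>A_0h_l$ only approximately, which is consistent with this being an approximation rather than an exact result.
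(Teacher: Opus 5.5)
Your proposal is correct and is essentially the paper's own argument in cleaner packaging. The paper also works in the Mellin--Barnes domain: it writes $K_{\alpha-\beta}$ as a $\mathrm{G}_{0,2}^{2,0}$, swaps integrals, and evaluates the $h_a$-integral against the log kernel, which is your $\mathbb{E}[h_g^{s-1}]\propto(\omega+s-1)^{-1/2}$ after a shift of the Mellin variable. It then applies the same replacement $x^{-1/2}\approx\Gamma(x)/\Gamma(x+\frac12)$, followed by a parameter shift via \cite[(9.31.5)]{Gradshteyn2014}; your factorization $\mathbb{E}[h^{s-1}]=h_l^{s-1}\mathbb{E}[h_a^{s-1}]\mathbb{E}[h_g^{s-1}]$ makes the Bessel conversion and that shift unnecessary.

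One correction to your closing remark: the exact PDF does not vanish for $h>A_0h_l$ at all, since $h_a$ is unbounded and so $h$ is supported on $(0,\infty)$; there is nothing being approximated there. What your Step 3 actually does is replace $\ln(A_0/h_g)$ by $1-h_g/A_0$ in $f_{h_g}$, which stays supported on $[0,A_0]$. That modified density has Mellin transform exactly $\sqrt{\omega}\,A_0^{s-1}\Gamma(\omega-1+s)/\Gamma(\omega-\frac12+s)$.
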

\begin{IEEEproof}
    The proof is provided in Appendix A.
\end{IEEEproof}
The approximation in \eqref{eq:pdf} is accurate over a wide range of values of $\omega$, covering the operating regimes of the considered ORIS-assisted FSO setup, with its accuracy monotonically improving as $\omega$ increases. This behavior is assessed in Section \ref{sec:results} through numerical validation under different physical scenarios. Appendix F further verifies that \eqref{eq:pdf} satisfies the PDF normalization condition within the accuracy of the adopted analytical approximation.

Building on the PDF expression in (\ref{eq:pdf}), the following theorem provides the CDF of the e2e link.
\begin{theorem}
    The CDF of the e2e channel can be approximately expressed as:
    \begin{align}
	F_{h}(h) =\frac{\sqrt{\omega}}{\Gamma(\alpha) \Gamma(\beta)} \mathrm{G}_{2,4}^{3,1}\!\left(\!\alpha \beta \frac{h}{A_0 h_l} \! \mathrel{\Bigg|}  
    \begin{array}{c}
	   \!\!\! 1, \,\omega + \frac{1}{2} \\ \!\!\! \alpha , \beta , \omega, 0
	\end{array}\!\!\!\right).
    \label{eq:cdf}
    \end{align}
\end{theorem}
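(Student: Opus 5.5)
The plan is to obtain the CDF by integrating the approximate PDF of Theorem 1 term by term, using the standard Meijer-$G$ integration identity. Set $c=\alpha\beta/(A_0 h_l)$. Then
\begin{equation}
F_h(h)=\int_0^h f_h(x)\,\mathrm{d}x=\frac{\sqrt{\omega}\,\alpha\beta}{h_l A_0\Gamma(\alpha)\Gamma(\beta)}\int_0^h \mathrm{G}_{1,3}^{3,0}\!\left(c x \,\Big|\, {\omega-\tfrac12 \atop \alpha-1,\beta-1,\omega-1}\right)\mathrm{d}x .
\end{equation}
I will substitute $u=cx$. The Jacobian $1/c=A_0h_l/(\alpha\beta)$ exactly cancels the factor $\alpha\beta/(h_lA_0)$ in front, which leaves $\frac{\sqrt{\omega}}{\Gamma(\alpha)\Gamma(\beta)}\int_0^{ch}\mathrm{G}_{1,3}^{3,0}(u\,|\,\cdot)\,\mathrm{d}u$.

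Next I will invoke the identity $\int_0^z \mathrm{G}_{p,q}^{m,n}(u\,|\,{\mathbf a \atop \mathbf b})\,\mathrm{d}u = z\,\mathrm{G}_{p+1,q+1}^{m,n+1}(z\,|\,{0,\mathbf a \atop \mathbf b,-1})$, which is the case $\alpha=1$ of \cite[eq. (7.811.2)]{Gradshteyn2014}. Equivalently, I can write $\int_0^z \mathrm{G}=\int_0^\infty u^0\,\mathrm{H}(z-u)\,\mathrm{G}\,\mathrm{d}u$ and verify the result through Mellin transforms. Applying it gives
\begin{equation}
z\,\mathrm{G}_{2,4}^{3,1}\!\left(z \,\Big|\, {0,\,\omega-\tfrac12 \atop \alpha-1,\beta-1,\omega-1,-1}\right).
\end{equation}
I will then absorb the prefactor $z$ with the shift rule $z^{\sigma}\mathrm{G}(z\,|\,{\mathbf a\atop\mathbf b})=\mathrm{G}(z\,|\,{\mathbf a+\sigma\atop \mathbf b+\sigma})$ taken at $\sigma=1$. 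This turns the parameters into $\{1,\omega+\tfrac12\}$ and $\{\alpha,\beta,\omega,0\}$, which is exactly \eqref{eq:cdf} with $z=\alpha\beta h/(A_0h_l)$.

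The step I expect to be the main obstacle is justifying the integration formula, that is, checking its convergence conditions. The integrand must be integrable at $u\to0$, which requires $\min(\alpha,\beta,\omega)>0$. This holds because near the origin the $\mathrm{G}_{1,3}^{3,0}$ behaves like $u^{\min(b_j)}$, and $\min(b_j)>-1$. I also need the new upper parameter $0$ not to collide with any $b_j$ in a way that creates poles. I would check this with the Mellin–Barnes contour representation, where the new factor $\Gamma(1-0-s)$ introduces poles separated from those of $\Gamma(b_j+s)$.

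As a sanity check, I would confirm that $F_h\to$ (approximately) $1$ as $h\to\infty$ by consistency with the normalization check of Appendix F. Since the PDF is itself an approximation, the CDF inherits that approximation, which is why the statement says ``approximately''.
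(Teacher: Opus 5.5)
Your proposal is correct and follows essentially the same route as the paper: you integrate the PDF of Theorem 1 over $[0,h]$ using the Meijer-$G$ integration identity, then absorb the prefactor $z$ via the shift rule \cite[(9.31.5)]{Gradshteyn2014}; the paper cites \cite[(26)]{algorithm} for the integration step, which is the same formula as your special case of \cite[eq. (7.811.2)]{Gradshteyn2014}. Your convergence condition $\min(\alpha,\beta,\omega)>0$ is right, and your sanity check also works: the resulting limit $F_h(\infty)=\sqrt{\omega}\,\Gamma(\omega)/\Gamma(\omega+\frac12)\approx 1$ agrees with the paper's Appendix F.
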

\begin{IEEEproof}
     The CDF is obtained as $F_{h}(h) = \int_{0}^{h} f_{h}(x)\mathrm{d}x$, where the PDF is provided by \eqref{eq:pdf}. By initially employing \cite[(26)]{algorithm} alongside \cite[(9.31.5)]{Gradshteyn2014}, and subsequently performing suitable algebraic manipulations, \eqref{eq:cdf} can be obtained.
\end{IEEEproof}
\begin{proposition}
Based on the previous Theorems, we derive the PDF and the CDF of the SNR in round $m$ of the HARQ process as:
\begin{align}
    &f_{\gamma_m}(\gamma_m) =\frac{\sqrt{\omega}}{2\gamma_m\Gamma(\alpha) \Gamma(\beta)}  
\mathrm{G}_{1,3}^{3,0}\!\left(\!\frac{\alpha \beta }{A_{0} h_{l}}\sqrt{\frac{\gamma_{m}}{\bar{\gamma}}} \! \mathrel{\Bigg|}  
    \begin{array}{c}
	   \!\!\! \omega + \frac{1}{2} \\ \!\!\! \alpha, \, \beta, \, \omega
	\end{array}\!\!\!\right),
    \label{eq:pdf_snr} \\
    & F_{\gamma_m}(\gamma_m) =\frac{\sqrt{\omega}}{\Gamma(\alpha) \Gamma(\beta)} \mathrm{G}_{2,4}^{3,1}\!\left(\!\frac{\alpha \beta }{A_{0} h_{l}}\sqrt{\frac{\gamma_{m}}{\bar{\gamma}}} \! \mathrel{\Bigg|}  
    \begin{array}{c}
	   \!\!\! 1, \,\omega + \frac{1}{2} \\ \!\!\! \alpha, \, \beta, \, \omega, \, 0
	\end{array}\!\!\!\right).
   \label{eq:cdf_snr}
\end{align}
\end{proposition}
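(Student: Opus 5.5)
We obtain the proposition from the two theorems above by a change of variables. The argument of the Meijer-$G$ functions is $\frac{\alpha\beta}{A_0h_l}\sqrt{\gamma_m/\bar\gamma}$. This fixes the SNR model: for IM/DD with the received signal $y_m=Px_mh_m+n_m$, the instantaneous electrical SNR is $\gamma_m=\bar\gamma\,h_m^2$, with $\bar\gamma$ absorbing $P^2\mathbb{E}[x_m^2]/\sigma_n^2$. The map $h\mapsto\gamma_m=\bar\gamma h^2$ is strictly increasing on $h\ge 0$, and its inverse is $h=\sqrt{\gamma_m/\bar\gamma}$. Because of the block-fading assumption, every round $m$ has the same marginal law of $h_m$ as $h$.

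\textbf{CDF.} Monotonicity gives $F_{\gamma_m}(\gamma_m)=\mathbb{P}(\bar\gamma h^2\le\gamma_m)=F_h\bigl(\sqrt{\gamma_m/\bar\gamma}\bigr)$. Substituting this into \eqref{eq:cdf} yields \eqref{eq:cdf_snr} directly, with no further manipulation.

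\textbf{PDF.} Differentiating, $f_{\gamma_m}(\gamma_m)=\frac{1}{2\sqrt{\bar\gamma\gamma_m}}f_h\bigl(\sqrt{\gamma_m/\bar\gamma}\bigr)$. Substituting \eqref{eq:pdf} gives the prefactor
\begin{equation*}
\frac{\sqrt{\omega}\,\alpha\beta}{2\sqrt{\bar\gamma\gamma_m}\,h_lA_0\Gamma(\alpha)\Gamma(\beta)} .
\end{equation*}
Write $z=\frac{\alpha\beta}{A_0h_l}\sqrt{\gamma_m/\bar\gamma}$. This prefactor equals $\frac{\sqrt\omega}{2\gamma_m\Gamma(\alpha)\Gamma(\beta)}\,z$. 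We then absorb the factor $z$ into the Meijer-$G$ function with the shift identity $z\,G^{m,n}_{p,q}(z\,|\,\mathbf a_p;\mathbf b_q)=G^{m,n}_{p,q}(z\,|\,\mathbf a_p+1;\mathbf b_q+1)$ \cite[(9.31.5)]{Gradshteyn2014}. This shifts the upper parameter $\omega-\tfrac12$ to $\omega+\tfrac12$ and the lower parameters $(\alpha-1,\beta-1,\omega-1)$ to $(\alpha,\beta,\omega)$, which reproduces \eqref{eq:pdf_snr}.

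The only step needing care is this bookkeeping: tracking the Jacobian $\tfrac{1}{2\sqrt{\bar\gamma\gamma_m}}$ together with the scaling $\frac{\alpha\beta}{A_0h_l}$, and recognizing that their product is exactly the Meijer argument $z$, so that the parameter-shift identity applies cleanly. As a consistency check, differentiating \eqref{eq:cdf_snr} with the Meijer-$G$ derivative rule should recover \eqref{eq:pdf_snr}.
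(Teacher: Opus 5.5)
Your proposal is correct and, for the PDF, follows the paper's proof exactly. You use the same Jacobian $\frac{1}{2\sqrt{\bar{\gamma}\gamma_m}}$, recognize the prefactor as $z/\gamma_m$ (the paper phrases this as multiplying and dividing by $\gamma_m$), and apply the same parameter shift via (9.31.5).

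For the CDF the routes differ slightly. The paper re-integrates $f_{\gamma_m}$ using the Adamchik--Marichev formula together with (9.31.5), whereas you compose Theorem 2 with the monotone inverse map, $F_{\gamma_m}(\gamma_m)=F_h\bigl(\sqrt{\gamma_m/\bar{\gamma}}\bigr)$. Because Theorem 2's $F_h$ is itself defined as $\int_0^h f_h$, both routes yield the identical expression, and yours is the shorter one.
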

\begin{IEEEproof}
Performing the RV transformation $\gamma_m = \bar{\gamma} h^2$, where 
\begin{equation}
    \bar{\gamma} = \frac{P^2}{\sigma_n^2}
\end{equation}
denotes the transmit SNR, we have 
\begin{equation}
    f_{\gamma_{m}}(\gamma_m)=\frac{1}{2 \sqrt{\bar{\gamma} \gamma_m}}f_h\left(\sqrt{\frac{\gamma_m}{\bar{\gamma}}}\right) , \: \:  \gamma_m \geq 0,
    \label{var_trans}
\end{equation}
which, by substituting \eqref{eq:pdf} in \eqref{var_trans}, can be expressed as:
    \begin{align}
         &f_{\gamma_m}(\gamma_m) =\frac{\sqrt{\omega}\alpha\beta}{2\sqrt{\bar{\gamma}}h_l A_0 \Gamma(\alpha) \Gamma(\beta)}\gamma_{m}^{-\frac{1}{2}} \nonumber \\ 
    & \; \; \; \; \; \; \; \; \; \; \; \; \times \mathrm{G}_{1,3}^{3,0}\!\left(\!\frac{\alpha \beta }{A_{0} h_{l}}\sqrt{\frac{\gamma_{m}}{\bar{\gamma}}} \! \mathrel{\Bigg|}  
    \begin{array}{c}
	   \!\!\! \omega - \frac{1}{2} \\ \!\!\! \alpha - 1, \, \beta - 1,\ \omega-1
	\end{array}\!\!\!\right).
    \end{align}
By multiplying and dividing by $\gamma_m$ and then utilizing \cite[(9.31.5)]{Gradshteyn2014}, we obtain \eqref{eq:pdf_snr}. The CDF is obtained as $F_{\gamma_m}(\gamma_m) = \int_{0}^{\gamma_m} f_{\gamma_m}(x)\mathrm{d}x$. By initially employing \cite[(26)]{algorithm} alongside \cite[(9.31.5)]{Gradshteyn2014}, and subsequently performing suitable algebraic manipulations, \eqref{eq:cdf_snr} results. This concludes the proof.
\end{IEEEproof}

\section{Performance Analysis}
This section presents the proposed outage and delay analysis of the HARQ-aided ORIS-assisted FSO link. Using the approximate e2e channel statistics derived in Section II, we first obtain the OP of HARQ-CC and an analytical OP upper bound for HARQ-IR. We then study the high-SNR behavior of both schemes to extract their diversity order and conclude with delay-oriented metrics for the truncated HARQ process.
\subsection{HARQ-CC Outage Probability }
The accumulated mutual information (AMI) across $M$ HARQ-CC rounds in the ORIS point-to-point system, can be expressed as \cite{Diamantoulakis}
\begin{align}
    \mathcal{I}^{\text{cc}}_M = \frac{1}{2} \log_2 \left(1+ 
    c\sum_{m=1}^{M} \gamma_m\right),
\end{align}
where $c = e/2\pi $ accounts for the IM/DD signaling model.
The OP of the HARQ-CC-aided ORIS assisted communication system is calculated as:
\begin{align}
    P_{o,M}^{\mathrm{cc}} = \mathbb{P}\left(\mathcal{I}^{\mathrm{cc}}_M< \mathcal{R}_t\right) = \mathbb{P}\left(\sum_{m=1}^M \gamma_m \leq \frac{2^{2\mathcal{R}_t}-1}{c}\right),
\end{align}
where the quantity $(2^{2\mathcal{R}_t}-1)/c$ is the required post-combining SNR threshold for successful decoding at spectral efficiency $\mathcal{R}_t$, which also corresponds to the selected modulation-and-coding scheme operating point.
\begin{proposition}
    The OP of the HARQ-CC-aided ORIS assisted channel is given by \eqref{Out_CC} shown at the top of the next page.
\end{proposition}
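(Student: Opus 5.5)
The goal is a closed-form CDF of the sum $S_M=\sum_{m=1}^M\gamma_m$ of $M$ i.i.d.\ SNRs evaluated at $\gamma_{\rm th}=(2^{2\mathcal{R}_t}-1)/c$, so that $P_{o,M}^{\mathrm{cc}}=F_{S_M}(\gamma_{\rm th})$. I would work in the Laplace (moment-generating-function) domain, where the i.i.d.\ sum becomes a product, and then return to the threshold domain as a multivariate Fox-$H$ function.

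\emph{Step 1 (Mellin--Barnes form).} Write the Meijer-$G$ in \eqref{eq:pdf_snr} as a contour integral,
\begin{equation*}
f_{\gamma}(\gamma)=\frac{\sqrt{\omega}}{2\gamma\,\Gamma(\alpha)\Gamma(\beta)}\frac{1}{2\pi j}\int_{\mathcal{L}}\frac{\Gamma(\alpha+s)\Gamma(\beta+s)\Gamma(\omega+s)}{\Gamma(\omega+\tfrac12+s)}\Big(\tfrac{\alpha\beta}{A_0h_l\sqrt{\bar\gamma}}\Big)^{-s}\gamma^{-s/2}\,\mathrm{d}s .
\end{equation*}
Because the argument depends on $\sqrt{\gamma}$, the per-round density is naturally a Fox-$H$ function in $\gamma$ with weight $1/2$ on $s$.

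\emph{Step 2 (per-round Laplace transform).} Compute $\mathcal{M}_\gamma(t)=\mathbb{E}[e^{-t\gamma}]$ by exchanging the order of integration. The inner integral is $\int_0^\infty \gamma^{-s/2-1}e^{-t\gamma}\mathrm{d}\gamma=\Gamma(-s/2)\,t^{s/2}$, so $\mathcal{M}_\gamma(t)$ is a single Mellin--Barnes integral, i.e.\ an $H$-function in $t^{-1/2}$. This requires a contour on which $\mathrm{Re}(s)<0$ while still separating the poles of $\Gamma(\alpha+s)\Gamma(\beta+s)\Gamma(\omega+s)$ from those of $\Gamma(-s/2)$. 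That is always possible because $\alpha,\beta,\omega>0$.

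\emph{Step 3 (product and inversion).} Using independence across rounds, $\mathcal{M}_{S_M}(t)=\mathcal{M}_\gamma(t)^M$. Write this product as an $M$-fold Mellin--Barnes integral with variables $s_1,\dots,s_M$. Then
\begin{equation*}
F_{S_M}(\gamma_{\rm th})=\mathcal{L}^{-1}\{\mathcal{M}_{S_M}(t)/t\}(\gamma_{\rm th}).
\end{equation*}
The $t$-dependence is $t^{\sum_i s_i/2-1}$, whose inverse Laplace transform gives $\gamma_{\rm th}^{-\sum_i s_i/2}/\Gamma(1-\sum_i s_i/2)$. The result is
\begin{equation*}
P_{o,M}^{\mathrm{cc}}=\Big(\tfrac{\sqrt{\omega}}{2\Gamma(\alpha)\Gamma(\beta)}\Big)^M\frac{1}{(2\pi j)^M}\oint\cdots\oint\frac{\prod_{i=1}^M\Gamma(\alpha+s_i)\Gamma(\beta+s_i)\Gamma(\omega+s_i)\Gamma(-s_i/2)}{\prod_{i=1}^M\Gamma(\omega+\tfrac12+s_i)\,\Gamma\!\big(1-\sum_{i}s_i/2\big)}\prod_{i=1}^M z^{-s_i}\,\mathrm{d}s_i,
\end{equation*}
with $z=\alpha\beta\sqrt{\gamma_{\rm th}}/(A_0h_l\sqrt{\bar\gamma})$. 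This is the multivariate Fox-$H$ function that \eqref{Out_CC} presumably denotes. For $M=1$ it should collapse to \eqref{eq:cdf_snr}, which I would use as a sanity check.

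The main obstacle is the legitimacy of the interchanges together with the choice of contours. One needs $\mathrm{Re}(s_i)<0$ for each Laplace integral, and convergence of the $M$-fold integral, which the Fox-$H$ existence conditions guarantee. I would verify these through the standard balance parameter $\sum B_j-\sum A_j>0$ for each variable. If a compact single-variable form were required instead, I would fall back to evaluating residues at the poles of $\Gamma(-s_i/2)$ near $s_i=0$. That gives a series in $\sqrt{\gamma_{\rm th}/\bar\gamma}$, which also feeds directly into the later diversity analysis.
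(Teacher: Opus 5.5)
Your proposal is correct and follows essentially the same route as the paper's Appendix B: Mellin--Barnes form of \eqref{eq:pdf_snr}, a per-round MGF that is an $H^{3,1}_{2,3}$ function via the $\Gamma(-s/2)$ Laplace integral, the product of the $M$ MGFs, and a term-by-term inverse Laplace transform producing exactly the coupling factor $1/\Gamma(1-\frac{1}{2}\sum_i s_i)$ behind the $(0;\frac12,\dots,\frac12)$ entry of \eqref{Out_CC}. The only differences are cosmetic: you use the $s\to -s$ Mellin--Barnes convention, and you invert $\mathcal{M}_\gamma(t)^M/t$ directly to obtain the CDF, which the paper mentions in a footnote as an alternative to its PDF-then-integrate step.
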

\begin{IEEEproof}
    The proof is provided in Appendix B.
\end{IEEEproof}
\subsection{HARQ-IR Outage Probability}
The AMI across $M$ HARQ-IR rounds in the ORIS point-to-point system, can be expressed as
\begin{align}
    \mathcal{I}^{\text{ir}}_M = \frac{1}{2} \sum_{m=1}^{M}\log_2\left(1+ 
    c \gamma_m\right).
\end{align}
Thus, the OP of the HARQ-IR-aided ORIS assisted communication system can be calculated as:
\begin{align}
    P_{o,M}^{\text{ir}} = \mathbb{P}\left(\mathcal{I}^{\text{ir}}_M< \mathcal{R}_t\right) = \mathbb{P}\left(\sum_{m=1}^{M}\log_2\left(1+ 
    c \gamma_m\right)< 2\mathcal{R}_t\right ).
    \label{out_ir_P}
\end{align}
\begin{proposition}
    An upper bound on the OP of the HARQ-IR-aided ORIS assisted channel is given by \eqref{up_bound_ir} shown at the top of this page.
\end{proposition}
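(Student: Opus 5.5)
To bound the HARQ-IR outage probability in \eqref{out_ir_P}, I will turn the sum of logarithms into a product and then bound that product by a single tractable functional of the per-round SNRs. Write the event as $\prod_{m=1}^{M}(1+c\gamma_m)<2^{2\mathcal{R}_t}$. Expanding the product gives $\prod_{m}(1+c\gamma_m)\geq 1+c\sum_m\gamma_m$, but that would only reproduce the HARQ-CC outage, which is too loose. Instead I will use the AM--GM type inequality $\prod_{m=1}^{M}(1+c\gamma_m)\geq \bigl(1+c(\prod_m\gamma_m)^{1/M}\bigr)^M$. Alternatively I can drop all cross terms except the last, $\prod_m(1+c\gamma_m)\geq 1+c^M\prod_m\gamma_m$. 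Either way the outage event is contained in an event on the product $Z=\prod_{m=1}^{M}\gamma_m$. Using the second (simplest) inequality,
\begin{equation}
P_{o,M}^{\mathrm{ir}}\leq \mathbb{P}\left(\prod_{m=1}^{M}\gamma_m<\frac{2^{2\mathcal{R}_t}-1}{c^{M}}\right),
\end{equation}
and this is a valid upper bound for arbitrary $M$.

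Next I need the CDF of a product of $M$ independent SNRs, each distributed as in \eqref{eq:pdf_snr}. Since $\gamma_m=\bar\gamma h_m^2$, the event is equivalent to $\prod_m h_m<\sqrt{(2^{2\mathcal{R}_t}-1)/(c^M\bar\gamma^M)}$. The product of $h_m$ is the natural object here because each $h_m$ has the Meijer-G PDF of \eqref{eq:pdf} with linear argument $\alpha\beta h/(A_0h_l)$. I will use the Mellin transform: each $h_m$ has Mellin transform
\begin{equation}
\mathbb{E}[h^{s-1}]=\frac{\sqrt{\omega}}{\Gamma(\alpha)\Gamma(\beta)}\left(\frac{A_0h_l}{\alpha\beta}\right)^{s-1}\frac{\Gamma(\alpha+s-1)\Gamma(\beta+s-1)\Gamma(\omega+s-1)}{\Gamma(\omega+s-\tfrac12)},
\end{equation}
read off directly from \eqref{eq:pdf} via the standard Mellin transform of $G^{3,0}_{1,3}$. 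By independence the Mellin transform of the product is the $M$-th power, so inverting gives a Meijer G-function $G^{3M,0}_{M,3M}$ with parameters $\alpha-1,\beta-1,\omega-1$ each repeated $M$ times, argument $(\alpha\beta/(A_0h_l))^M z$, and prefactor $(\sqrt\omega\,\alpha\beta/(A_0h_l\Gamma(\alpha)\Gamma(\beta)))^M$. Integrating from $0$ to the threshold, exactly as in the proof of the CDF theorem (using \cite[(26)]{algorithm} and \cite[(9.31.5)]{Gradshteyn2014}), yields a closed form $G^{3M,1}_{M+1,3M+1}$ with upper parameters $1$ and $\omega+\tfrac12$ repeated $M$ times, and lower parameters $\alpha,\beta,\omega$ repeated $M$ times followed by $0$. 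This is the expression I expect \eqref{up_bound_ir} to be.

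The main obstacle is bookkeeping rather than analysis. The repeated parameters make the G-function degenerate (coinciding poles), so I must check that the Mellin--Barnes contour still separates the poles and that the inversion is legitimate; both follow because all lower parameters have positive real part. A second point is whether the bound is tight enough to preserve the diversity order. I will check this in the high-SNR analysis: the product bound keeps the dominant small-argument behaviour $\propto \bar\gamma^{-M\min(\alpha,\beta,\omega)/2}$ (up to log factors from repeated poles), so it yields the correct diversity, whereas the AM--GM version would only improve the coding gain.
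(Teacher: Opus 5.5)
Your product-distribution step is sound. The Mellin transform you read off from \eqref{eq:pdf} is correct, its $M$-th power inverts to a $G^{3M,0}_{M,3M}$, and integrating gives the $G^{3M,1}_{M+1,3M+1}$ structure of \eqref{up_bound_ir}. This one-shot Mellin argument on $h$ is a tidier version of the paper's recursive pairwise convolution of SNR PDFs.

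The gap is in the first step. You explicitly choose the inequality $\prod_m(1+c\gamma_m)\ge 1+c^M\prod_m\gamma_m$, which gives the threshold $\prod_m\gamma_m<(2^{2\mathcal{R}_t}-1)/c^M$. Carried through your own computation, this makes the G-function argument $\left(\frac{\alpha\beta}{A_0h_l}\right)^M\sqrt{\frac{2^{2\mathcal{R}_t}-1}{c^M\bar\gamma^M}}$. The argument in \eqref{up_bound_ir} is instead $\left(\frac{\alpha\beta}{A_0h_l}\sqrt{\frac{2^{2\mathcal{R}_t/M}-1}{c\bar\gamma}}\right)^M$. With $x=2^{2\mathcal{R}_t/M}>1$ one has $x^M-1=\sum_{k=1}^M\binom{M}{k}(x-1)^k\ge (x-1)^M$, strictly for $M\ge 2$, so your threshold is larger. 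The right-hand side of \eqref{up_bound_ir} is the CDF of $\prod_m h_m$ evaluated at a rescaled argument, so it is nondecreasing in that argument. Your bound therefore dominates \eqref{up_bound_ir}, and proving $P_{o,M}^{\mathrm{ir}}\le$ (your bound) does not prove the stated, tighter bound. Your sentence ``this is the expression I expect \eqref{up_bound_ir} to be'' is false for every $M\ge 2$.

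The repair is to use the other inequality you wrote down, which is the Minkowski (AM--GM) step the paper uses: $\prod_m(1+c\gamma_m)\ge\bigl(1+c(\prod_m\gamma_m)^{1/M}\bigr)^M$. With it, the outage event is contained in $\{\prod_m\gamma_m<((2^{2\mathcal{R}_t/M}-1)/c)^M\}$, equivalently $\prod_m h_m<\bigl(\sqrt{(2^{2\mathcal{R}_t/M}-1)/(c\bar\gamma)}\bigr)^M$. Feeding this threshold into your CDF gives exactly \eqref{up_bound_ir}. So the AM--GM version is not an optional coding-gain refinement, as your last paragraph suggests; it is precisely what the stated formula requires.
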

\begin{figure*} 
\begin{align}
        P_{o,M}^{\mathrm{cc}}\left(\mathcal{R}_t\right) = \Bigg( \frac{\sqrt{\omega}}{2\Gamma(\alpha) \Gamma(\beta)} \Bigg)^M
H\underbrace{_{0,1;\,2,3;\,\dots;\,2,3}^{{0,0;\,3,1;\,\dots;\,3,1}}}_{M+1}
\left[
    {\frac{\alpha \beta }{A_{0} h_l} \sqrt{\frac{2^{2\mathcal{R}_t}-1}{c\bar{\gamma}}}}^{(M)}
    \mathrel{\Bigg|}  
  \begin{matrix}
    - \\
    \big(0\, ; \underbrace{\frac{1}{2},\dots,\frac{1}{2}}_M \big)
  \end{matrix}
   \mathrel{\Bigg|} 
   \begin{matrix}
    \left\{\left(1,\frac{1}{2}\right), \left(\frac{1}{2}+ \omega,1\right)\right\}_1^M\\
    \left\{(\alpha,1),(\beta,1),(\omega,1)\right\}_1^M
  \end{matrix}
\right]
\label{Out_CC}
    \end{align}
\hrulefill
\begin{align}
     P_{o,M}^{\mathrm{ir,ub}} \left(\mathcal{R}_t\right) = \Bigg(\frac{\sqrt{\omega}}{\Gamma(\alpha) \Gamma(\beta)}\Bigg)^M \mathrm{G}_{M+1,3M+1}^{3M,1}\left( \left(\frac{\alpha \beta }{A_{0} h_l} \sqrt{\frac{2^{\frac{2\mathcal{R}_t}{M}}-1}{c\bar{\gamma}}}\right)^{M} \! \mathrel{\Bigg|} \begin{array}{c} \!\!\!
1, \left\{\omega+\frac{1}{2}\right\}_1^M  \\ \!\!\!
\left\{\alpha\right\}_1^M, \left\{\beta\right\}_1^M,  \left\{\omega\right\}_1^M,0
\end{array}\!\!\!\right)
     \label{up_bound_ir}
\end{align}
  \vspace{-0.4cm}  \hrulefill
\end{figure*}

\begin{IEEEproof}
    The proof is provided in Appendix C.
\end{IEEEproof}

\subsection{Asymptotic \& bound analysis}
In this section, we provide an asymptotic analysis regarding the Rx's outage behavior for high transmit SNR values. Towards this direction, we provide an analytical approximation for the OP of HARQ-IR in the high-SNR region and an order estimate for the OP of HARQ-CC in the high-SNR limit. The derived expressions are subsequently utilized to extract the diversity order of both schemes, which can be exploited to provide valuable insights for system design.

\begin{proposition}
An asymptotic order estimate for the OP of ORIS-enabled HARQ-CC systems in the high-SNR regime after $M$ transmissions is given by
\begin{equation}
    P_{o,M}^{\mathrm{cc}, \infty}\left(u_{\bar{\gamma},\mathcal{R}_t}\right) \simeq O \left({u_{\bar{\gamma},\mathcal{R}_t}}^{\!\!Mq^{*}} \left(\ln u_{\bar{\gamma},\mathcal{R}_t}\right)^{M(M-1)}\right),
    \label{order_estimate}
\end{equation}
where  $u_{\bar{\gamma},\mathcal{R}_t} =\frac{\alpha \beta }{A_{0} h_l} \sqrt{\frac{2^{2\mathcal{R}_t}-1}{c\bar{\gamma}}}$ and $q^{*} = \min(\alpha, \beta, \omega)$.
\end{proposition}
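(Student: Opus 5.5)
Since $u=u_{\bar{\gamma},\mathcal{R}_t}\to 0$ as $\bar{\gamma}\to\infty$, the plan is to study the multivariate Fox-$H$ expression \eqref{Out_CC} for small $u$ through its Mellin--Barnes representation. We write \eqref{Out_CC} as an $M$-fold contour integral
\begin{equation*}
P_{o,M}^{\mathrm{cc}}=\Big(\tfrac{\sqrt{\omega}}{2\Gamma(\alpha)\Gamma(\beta)}\Big)^M\frac{1}{(2\pi i)^M}\oint\!\cdots\!\oint \frac{\prod_{m=1}^M \frac{\Gamma(\alpha+s_m)\Gamma(\beta+s_m)\Gamma(\omega+s_m)\Gamma(-s_m)}{\Gamma(1-s_m/2)\,\Gamma(\frac12+\omega+s_m)}}{\Gamma\big(1-\tfrac12\sum_m s_m\big)}\,u^{-\sum_m s_m}\,\mathrm{d}s_1\cdots\mathrm{d}s_M .
\end{equation*}
The exact sign conventions follow from the parameter lists in \eqref{Out_CC}; the structure above is what matters. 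The only coupling among the variables is the outer factor $1/\Gamma(1-\frac12\sum s_m)$, which comes from the Laplace-domain treatment of the sum $\sum\gamma_m$ in Appendix B.

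\textbf{Residue evaluation.} For small $u$, each contour is closed to the left, picking up residues at the poles of $\Gamma(\alpha+s_m)\Gamma(\beta+s_m)\Gamma(\omega+s_m)$. The leading terms come from the rightmost poles, i.e.\ at $s_m=-q^*$ with $q^*=\min(\alpha,\beta,\omega)$. For each $m$, the residue there contributes a factor $u^{q^*}$, so the product over $m$ gives $u^{Mq^*}$. The outer coupling factor is evaluated at $\sum s_m=-Mq^*$ and is a finite nonzero constant.

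\textbf{Origin of the logarithmic factor.} We need to argue why poles can coalesce and produce logarithms. Two mechanisms are possible. First, when two of $\alpha,\beta,\omega$ coincide, or differ by an integer, the poles are of higher order. Second, the iterated residues in the $M$ variables can interact through the coupling term, since after substituting $s_m=-q^*+\epsilon_m$ the combined integrand can have poles of order up to $M$ in an aggregate variable. We propose to bound the worst case rather than compute it exactly. Each residue of a pole of order $k$ produces a factor $(\ln u)^{k-1}$. Taking a pole of multiplicity at most $M$ in each of the $M$ variables gives at most $(\ln u)^{M(M-1)}$ in total. This yields the big-$O$ statement \eqref{order_estimate}, with the constant absorbing the residue coefficients.

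\textbf{Main obstacle.} The hardest step is rigorously justifying the multiplicity bound, i.e.\ that no pole order exceeds $M$ per variable. As a fallback, we would first bound the HARQ-CC outage by elementary means. Using
\begin{equation*}
\mathbb{P}\Big(\sum_m\gamma_m\le x\Big)\le\prod_m F_{\gamma_m}(x),
\end{equation*}
together with the small-argument behaviour of \eqref{eq:cdf_snr}, we obtain the $u^{Mq^*}$ scaling up to logarithmic factors. The logarithmic factors arise only when $\alpha$, $\beta$, $\omega$ coincide, which yields $\ln u$ terms in $F_{\gamma_m}$. This confirms the exponent $Mq^*$ and hence diversity order $Mq^*/2$ in $\bar\gamma$. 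It leaves the logarithmic exponent as a conservative order estimate, consistent with the statement being phrased as an $O(\cdot)$ bound rather than an exact asymptote.
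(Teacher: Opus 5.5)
\textbf{Primary route.} Your residue argument is, in substance, the paper's Appendix D. The paper also takes the pole at $-q^*$ in each of the $M$ Mellin--Barnes variables. It then asserts, by citing multivariate Fox-$H$ expansions, a factor $u_m^{q^*}(\ln u_m)^{M-1}$ per variable before setting $u_m=u$. You therefore inherit its unsupported log count.

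Your ``second mechanism'' should be dropped. The coupling factor $1/\Gamma(1-\frac12\sum_m s_m)$ is entire, so it cannot create poles or raise their order, and the iterated residues factorize. The order of the pole at $s_m=-q^*$ is therefore the number of parameters among $\alpha,\beta,\omega$ equal to $q^*$. Integer offsets only affect subleading poles, so this order is at most $3$ regardless of $M$. Order-$M$ poles belong to the HARQ-IR Meijer-$G$ of Appendix E, where each parameter is repeated $M$ times, not to the CC expression. This observation, not a ``multiplicity $\le M$'' bound, is what resolves the obstacle you identified.

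\textbf{Fallback route.} Your fallback is a genuinely different route, and it proves the proposition outright rather than only `confirming the exponent'. The $\gamma_m\ge0$ are independent and the approximate density \eqref{eq:pdf_snr} is nonnegative. Hence $\{\sum_m\gamma_m\le x\}\subseteq\bigcap_m\{\gamma_m\le x\}$ gives $P^{\mathrm{cc}}_{o,M}\le\prod_m F_{\gamma_m}(x)$ with $x=(2^{2\mathcal{R}_t}-1)/c$. The argument of \eqref{eq:cdf_snr} at this $x$ is exactly $u$.

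When $q^*$ is attained by a single parameter, the leading residue of $\mathrm{G}^{3,1}_{2,4}$ gives $F_{\gamma_m}=\Theta(u^{q^*})$. So $P^{\mathrm{cc}}_{o,M}=O(u^{Mq^*})$, which implies \eqref{order_estimate} because $|\ln u|^{M(M-1)}$ does not decay. Compared with the paper's route, this needs only single-variable Meijer-$G$ asymptotics instead of multivariate Fox-$H$ ones. It also shows the logarithmic factor is slack.

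An upper bound alone gives only diversity at least $Mq^*/2$. Adding $P^{\mathrm{cc}}_{o,M}\ge\prod_m F_{\gamma_m}(x/M)$ yields $\Theta(u^{Mq^*})$, which is what the diversity order in Remark~1 actually requires.

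\textbf{Tied parameters.} With $k\in\{2,3\}$ parameters tied at $q^*$, the same argument gives $O(u^{Mq^*}|\ln u|^{M(k-1)})$. This lies inside the claimed bound if and only if $k\le M$. In the remaining cases the proposition itself fails; for example, $M=1$ with $\alpha=\beta<\omega$ gives $P^{\mathrm{cc}}_{o,1}\asymp u^{q^*}|\ln u|$. You should therefore state explicitly the simple-minimum assumption that the statement tacitly needs.
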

\begin{IEEEproof}
    The proof is provided in Appendix D.
\end{IEEEproof}
Accounting for the fact that polynomial decay of $\bar{\gamma} ^ {-Mq^{*}/2}$ dominates the logarithmic growth in the asymptotic limit, the multiplicative logarithmic term  modifies the prefactor but does not change the fundamental rate at which the
OP goes to zero as $\bar{\gamma} \rightarrow \infty$. Therefore, we arrive at the following remark.

\begin{remark}
The diversity order of the HARQ-CC-assisted ORIS-enabled channel, after $M$ transmission rounds, is equal to $\mathcal{D}_{\mathrm{cc}}^{\infty} = \frac{M}{2} \min\left(\alpha, \beta, \omega\right)$. The primary factor that influences the asymptotic decay of the OP at the Rx are the turbulence conditions when $\min\left(\alpha,\beta\right) < \omega$, whereas the overall pointing errors become the dominant factor affecting performance when $\omega< \min\left(\alpha,\beta\right)$.
\end{remark}

\begin{proposition}
The OP of ORIS-enabled HARQ-IR systems in the high-SNR regime after $M$ maximum transmissions can be upper-bounded by
\begin{align}
&P^{\mathrm{ir,ub}}_{o,M}(\mathcal{R}_t) \! \stackrel{ \bar{\gamma}\rightarrow \infty}{\simeq} \!
\frac{M^{M-1}}{q^{*}(M-1)!}
\left(
\frac{\sqrt{\omega}\,\Gamma(p^{*}-q^{*})\Gamma(r^{*}-q^{*})} 
{\Gamma(\alpha)\Gamma(\beta)\Gamma\!\left(\omega-q^{*}+\frac12\right)}
\right)^{\!M}  \nonumber \\
& \times (-\ln\Phi_{\bar{\gamma},\mathcal{R}_t})^{M-1}\Phi_{\bar{\gamma},\mathcal{R}_t}^{M q^{*}},
\label{high-snr-ir}
\end{align}
where we defined $\Phi_{\bar{\gamma},\mathcal{R}_t} =
\frac{\alpha\beta}{A_0 h_l}
\sqrt{\frac{2^{\frac{2R_t}{M}}-1}{c\,\bar{\gamma}}}$, $q^{*} 
 = \min(\alpha, \beta, \omega)$, $p^{*} = \max(\alpha, \beta, \omega)$ and $r^{*} = \text{med}(\alpha, \beta, \omega)$.
\end{proposition}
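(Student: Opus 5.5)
The plan is to start from the closed-form upper bound \eqref{up_bound_ir} and extract its small-argument behaviour. As $\bar{\gamma}\to\infty$ we have $\Phi_{\bar{\gamma},\mathcal{R}_t}\to 0$, so the Meijer-$G$ argument $z=\Phi_{\bar{\gamma},\mathcal{R}_t}^M$ tends to zero. The natural tool is the residue expansion of the Mellin--Barnes integral defining $\mathrm{G}_{M+1,3M+1}^{3M,1}$. I would write
\begin{equation*}
\mathrm{G}(z)=\frac{1}{2\pi j}\int_{\mathcal{L}}\frac{\Gamma(\alpha-s)^M\,\Gamma(\beta-s)^M\,\Gamma(\omega-s)^M\,\Gamma(s)}{\Gamma\!\left(\omega+\frac12-s\right)^M\,\Gamma(1+s)}\,z^{s}\,\mathrm{d}s .
\end{equation*}
The factor $\Gamma(s)$ comes from $a_1=1$, since $\Gamma(1-a_1+s)=\Gamma(s)$. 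The factor $\Gamma(1+s)$ comes from the last lower parameter $b_{3M+1}=0$, which is not among the first $m=3M$. The numerator and denominator upper factors come from the $\{\omega+\frac12\}_1^M$ entries.

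Closing the contour to the right picks up the poles of $\Gamma(\alpha-s)^M\Gamma(\beta-s)^M\Gamma(\omega-s)^M$ at $s=\alpha+k$, $\beta+k$, $\omega+k$ for $k\ge0$. For $z\to0$ the dominant contribution is the pole with the smallest real part, namely $s=q^{*}=\min(\alpha,\beta,\omega)$. Generically (assuming $q^{*}$ is strictly smaller than $r^{*}$ and $p^{*}$, and that these three are not separated by integers in a degenerate way) this pole has order exactly $M$.

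Next I would compute the residue at this order-$M$ pole. Write $\Gamma(q^{*}-s)^M=(-1)^M(s-q^{*})^{-M}\,[\Gamma(1+q^{*}-s)]^M$, and let $g(s)$ denote everything else, including $z^s$. The residue is then $\frac{(-1)^{M}}{(M-1)!}\frac{\mathrm{d}^{M-1}}{\mathrm{d}s^{M-1}}\big[\ldots g(s)\big]_{s=q^{*}}$, where the overall sign is fixed by the clockwise orientation when closing to the right.

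Differentiating $z^s$ $(M-1)$ times produces $z^{q^{*}}(\ln z)^{M-1}$ as the term with the highest power of $\ln z$. All other terms carry lower powers of $\ln z$, which are subdominant as $z\to0$ because $|\ln z|\to\infty$. The sign bookkeeping should turn $(\ln z)^{M-1}$ into $(-\ln z)^{M-1}$, giving a positive result. The remaining factors are then evaluated at $s=q^{*}$:
\begin{itemize}
\item $\Gamma(p^{*}-q^{*})^M\Gamma(r^{*}-q^{*})^M$ from the two non-minimal lower parameters;
\item $\Gamma(\omega-q^{*}+\frac12)^{-M}$ from the upper parameters;
\item $\Gamma(q^{*})/\Gamma(1+q^{*})=1/q^{*}$ from the remaining pair.
\end{itemize}
Finally I would substitute $z=\Phi^M$, so that $z^{q^{*}}=\Phi^{Mq^{*}}$ and $(-\ln z)^{M-1}=M^{M-1}(-\ln\Phi)^{M-1}$. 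Multiplying by the prefactor $(\sqrt{\omega}/(\Gamma(\alpha)\Gamma(\beta)))^M$ from \eqref{up_bound_ir} then yields exactly \eqref{high-snr-ir}.

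The main obstacle is the rigorous handling of the $M$-th order pole. Two points need care: showing that only the top logarithmic term survives to leading order, and keeping track of the signs from the orientation and from $(-1)^M$. I would handle this by writing $z^{s}=z^{q^{*}}\exp\big((s-q^{*})\ln z\big)$ and expanding in powers of $(s-q^{*})$. The coefficient of $(s-q^{*})^{M-1}$ is then $\frac{(\ln z)^{M-1}}{(M-1)!}$ times the analytic part at $s=q^{*}$, plus terms of lower order in $\ln z$.

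A secondary issue is the degenerate case where two of $\alpha,\beta,\omega$ coincide, in which the pole order rises to $2M$ or $3M$. I would state the result under the distinctness assumption and note that in the degenerate case the diversity exponent $Mq^{*}$ is unchanged and only the logarithmic power changes.
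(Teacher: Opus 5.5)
Your proposal is correct and follows the paper's argument in Appendix E: a residue expansion of the Mellin--Barnes integral of \eqref{up_bound_ir} at the order-$M$ pole at $q^{*}$, keeping only the top logarithmic term and then using $(-1)^{M-1}(\ln\Phi^M)^{M-1}=M^{M-1}(-\ln\Phi)^{M-1}$. The paper places the pole at $s=-q^{*}$ only because it uses the opposite Mellin--Barnes sign convention, and your explicit assumption that $q^{*}$ is strictly smaller than $r^{*}$ and $p^{*}$ is a worthwhile addition, since the stated formula needs it for $\Gamma(r^{*}-q^{*})$ to be finite.
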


\begin{IEEEproof}
    The proof is provided in Appendix E.
\end{IEEEproof}
Similarly, since the polynomial decay term $\bar{\gamma}^{-Mq^{*}/2}$ dominates the logarithmic factor in the high-SNR limit, the diversity order is determined only by the exponent of $\bar{\gamma}$. This leads to the following remark. 
\begin{remark}
The diversity order of the HARQ-IR-assisted ORIS-enabled channel is also equal to $\mathcal{D}_{ir}^{\infty} = \frac{M}{2} \min\left(\alpha, \beta, \omega\right)$. The primary factor that influences the asymptotic decay of OP of the Rx are the turbulence conditions when $\min\left(\alpha,\beta\right) < \omega$, whereas the overall pointing errors become dominant when $\omega< \min\left(\alpha,\beta\right)$.
\end{remark}
From the above remark it follows that, when $\omega$ is small, the system becomes pointing-error limited, and improving mechanical stability, alignment accuracy, or ORIS placement may provide more benefits than simply increasing the transmit power or the retransmission budget $M$. 

\vspace{-0.2cm}
\subsection{Mean Delay} In addition to reliability, the latency cost of retransmissions
is a critical design consideration for ORIS-assisted links. While
increasing the maximum number of transmission rounds~$M$
reduces the OP, it also introduces additional
delay when the channel conditions are poor. Therefore, it is
important to quantify the delay behavior of the truncated
HARQ process and understand the trade-off between reliability
and latency.

Let $D \in \{1,  \dots,m, \dots, M\}$ denote the number of HARQ rounds used until the packet is either successfully decoded or the maximum retransmission budget is exhausted. Then, for both HARQ-CC and HARQ-IR, the mean number of HARQ rounds per packet attempt, which captures the average protocol delay regardless of the decoding outcome, is given by \cite{E_D}
\begin{equation}
\mathbb{E}[D] = 1+\sum_{m=1}^{M-1} P_{\mathrm{o},m}(\mathcal{R}_t).
\label{eq:mean_D}
\end{equation}
In $\mathbb{E}[D]$, a packet that is not decoded within the $M$-round limit is counted as occupying all $M$ HARQ rounds before being dropped. Hence, $\mathbb{E}[D]$ measures the average HARQ-round consumption per packet attempt, including both successful and unsuccessful transmissions.
Larger values of $M$ reduce the OP, but may increase the average number of HARQ rounds, since packets that would have been declared unsuccessful for smaller $M$ are now allowed to be decoded at later rounds. 

However, conditioning on successful delivery yields a refined metric that isolates the delay induced by packets that are eventually decoded correctly, namely
\begin{align}
\mathbb{E}[D \mid \mathrm{success}] &= \sum_{m=1}^{M} m\,\Pr(D=m\mid \mathrm{success}) \nonumber \\
&=\frac{\sum_{m=1}^{M} m \left(P_{\mathrm{o},m-1}(\mathcal{R}_t) - P_{\mathrm{o},m}(\mathcal{R}_t)\right)}
{1 - P_{\mathrm{o},M}(\mathcal{R}_t)}.
\label{eq:mean_D_success}
\end{align}
The numerator in \eqref{eq:mean_D_success} is a weighted sum in which
$P_{\mathrm{o},m-1}(\mathcal{R}_t)-P_{\mathrm{o},m}(\mathcal{R}_t)$
denotes the probability that decoding succeeds exactly at round $m$,
while the denominator normalizes over all successful attempts. Therefore, $\mathbb{E}[D\mid\mathrm{success}]$ excludes the dropped packets and averages only over packet attempts that are successfully decoded within the $M$-round limit. In this sense, comparing $\mathbb{E}[D]$ and $\mathbb{E}[D\mid\mathrm{success}]$ separates the useful retransmission efforts that lead to successful delivery from the rounds spent on packets that ultimately fail.

\section{Numerical Results\label{sec:results}} 
In this section, we analyze the performance of the considered HARQ-assisted ORIS-enabled communication system by presenting both theoretical \eqref{Out_CC} (th.), theoretical upper bound \eqref{up_bound_ir} (th. ub.), asymptotic upper bound \eqref{high-snr-ir} (asymp. ub.), and simulation (sim.) results. Unless indicated otherwise, the parameter values adopted for the extraction of the following figures can be found in Table \ref{table:par}. It is noted that the value $\theta_{rp}=\pi/3 \text{ rad}$ corresponds to a relatively oblique
projection geometry and is therefore a more challenging scenario. Since $\cos(\theta_{rp})=0.5$, this angle affects both the
collected-power coefficient $A_0$ and the pointing-error parameter
$\omega$, reducing the effective aperture projection and increasing the
sensitivity to misalignment. Also, the numerical evaluation of the multivariate Fox-$H$ function follows the methodology and the code presented in \cite{Integral_of_Fox}. Simulations were conducted in MATLAB$\copyright$ and also the Symbolic Math Toolbox was utilized to prevent overflow issues\footnote{Specifically, the \emph{vpa($\cdot$)} function can be utilized to leverage higher-precision arithmetic and thereby prevent floating-point overflows.}.
\begin{table}[h]
\caption{Simulation parameter values.}
\centering
 \resizebox{0.95\columnwidth}{!}{%
 \begin{tabular}{ccc}
 	\hline Parameters & Symbol & Value \\
 	\hline 
        Wavelength  & $\lambda$  & $1550 \text{ nm}$ \\
        Link range & $d_{\mathrm{e2e}}$ & $1\mathrm{~km}$\\
        Attenuation coefficient & $\sigma$  & $0.43\times 10^{-3} \,\text{dB/m}$  \\
        Reflection efficiency & $\zeta$  &  0.8\\
        Lens radius & $a_{l}$  & $2.5\text{ cm}$ \\
        Angle b/w PD and reflected beam  & $\theta_{rp}$  & $\pi/3$ rad \\
        Reflection angle at ORIS  & $\theta_{r}$  & $\pi/6$ rad \\
        Incident angle at ORIS  & $\theta_{i}$  & $\pi/6$ rad \\
        Building sways at Tx, ORIS, Rx, & $(\sigma_s, \sigma_r, \sigma_p)$ &  $(a_l,a_l,a_l)$ \\
        Index of refraction structure & $C_n^2$ & $5\times 10^{-14}$ $\text{ m}^{-2/3}$ \\
        Spectral efficiency & $\mathcal{R}_t$ & $2 \text{ bps/Hz}$\\
 	\hline
 \end{tabular}}
 \label{table:par}
\end{table}

\begin{figure}[h]
\centering
\vspace{-0.1cm}
\includegraphics[width=0.85\columnwidth]{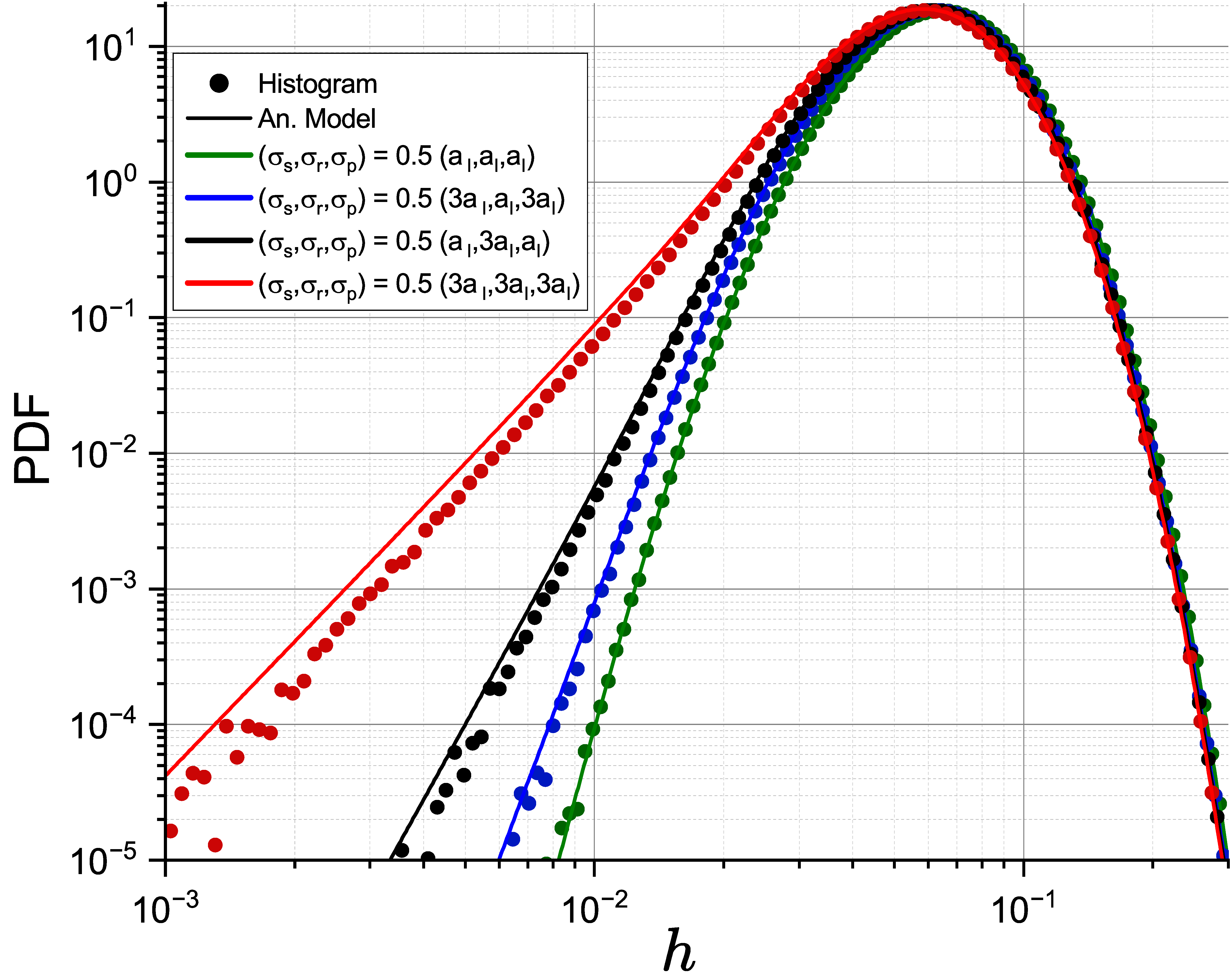}
\caption{PDF of the ORIS-enabled e2e channel.}
\vspace{-0.1cm}
\label{PDF_precision}
\end{figure}

\begin{figure}[h]
\centering
\includegraphics[width = 1\columnwidth]{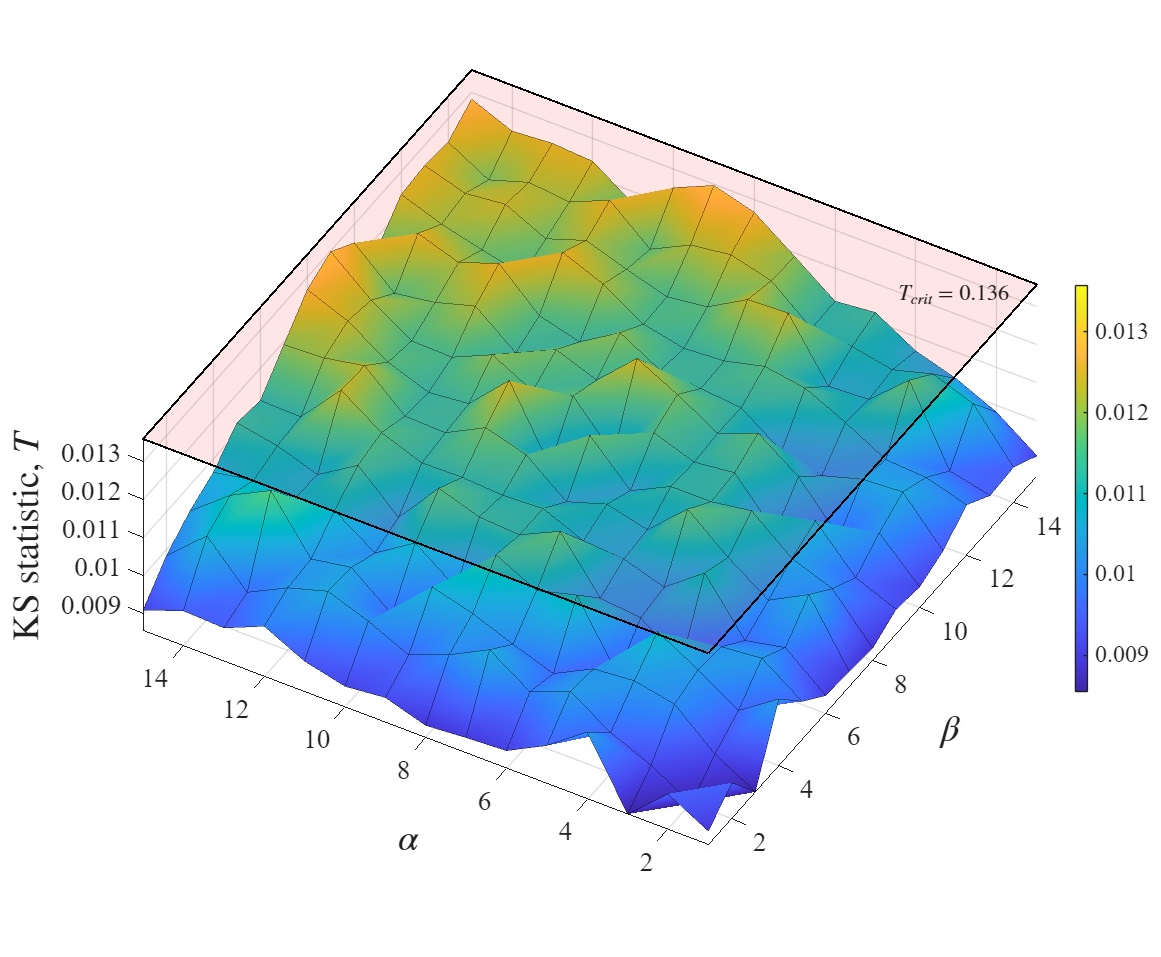}
\caption{ KS statistic, $T$, between the empirical CDF of the e2e channel coefficient, for different values of the GG turbulence parameters $\alpha$ and $\beta$. The horizontal plane denotes the reference critical value $T_{\text{crit}}$.}
\label{KS}
\end{figure}

In Fig. \ref{PDF_precision}, we illustrate the accuracy of the extracted statistical
model for the ORIS-enabled e2e channel, as characterized by \eqref{eq:pdf} and \eqref{eq:cdf}. For simulation, we plot the histogram of the channel coefficient $h$ for $10^8$ samples and 100 bins given by \eqref{ch_coefficient}. More specifically, Fig. \ref{PDF_precision} depicts the PDF of \eqref{eq:pdf} for four different building sway scenarios. Namely, (a) when Tx/Rx and the ORIS are subjected to the same modest building sways, (b) when the Tx and Rx sways are three times greater than the ORIS's, (c) when the ORIS's sway is three times greater than the Tx/Rx sways, and (d) when all three components are subjected to equal but more severe fluctuations. We observe that the PDF approximation exhibits close agreement with the Monte Carlo simulations in all considered cases, as evidenced by the close match between the analytical curves and the histograms of $h$. A slight deviation is observed in the severe-sway case, where stronger pointing errors correspond to a smaller pointing-error parameter $\omega$, while also considering deviations caused by the finite-sample binning. This is expected, since the approximation leading to \eqref{eq:pdf} becomes more accurate as $\omega$ increases. Nevertheless, the model still captures the dominant distributional behavior, including the shift of probability mass toward smaller values of $h$, which reflects a higher probability of severe pointing-induced attenuation and explains the outage degradation observed later.

To further assess the validity of the proposed approximate e2e channel model, we employ the KS goodness-of-fit test \cite{KS}. The KS test directly compares the empirical CDF obtained from Monte Carlo realizations with the CDF of the approximating model. Specifically, the KS statistic is defined as the maximum absolute distance between the empirical CDF, denoted by $\widehat{F}_h(\cdot)$, and the model CDF, denoted by $F_h(\cdot)$, i.e.,
\begin{equation}
T
\triangleq
\sup_{x\geq 0}
\left|
\widehat{F}_h(x)-F_h(x)
\right|.
\end{equation}
For each pair of turbulence parameters $(\alpha,\beta)$, Monte Carlo samples are generated for the exact e2e channel model, while the model CDF is obtained from \eqref{eq:cdf}. Fig.~\ref{KS} presents the mean KS statistic over the grid $\alpha,\beta\in\{1,\ldots,15\}$, which covers the majority of turbulence levels. The results are averaged over $60$ independent simulation runs, each with $N_{\mathrm{MC}}=10^4$ channel samples. The horizontal reference plane corresponds to the critical value \cite{KS}
\begin{equation}
T_{\text{crit}}
=
\frac{c_{a}}{\sqrt{N_{\mathrm{MC}}}},
\qquad
c_{a}
=
\sqrt{-\frac{1}{2}\ln\left(\frac{a_{\mathrm{sig}}}{2}\right)},
\end{equation}
with significance level $a_{\mathrm{sig}}=0.05$.
The KS surface remains under the reference plane over the entire turbulence range, confirming that the proposed approximation provides a satisfactory CDF fit. It is worth noting that this validation is performed for a small value of  $\omega\simeq 3.5$, which corresponds to a challenging misalignment regime. Therefore, the results show that the analytical approximation remains accurate even for low values of $\omega$ despite the initial stringent assumption of $\omega \gg 1$, while its accuracy is expected to further improve as $\omega$ increases. The KS statistic, $T$, slightly increases for larger values of $\alpha$ and $\beta$, where most channel realizations remain closer to the mean gain. As a result, small mismatches between the empirical and analytical CDFs can become more visible. Nevertheless, the discrepancy remains small, supporting the use of the proposed approximate channel statistics in the subsequent analysis.

\begin{figure}[h]
\centering
\vspace{-0.1cm}
\includegraphics[width=0.95\columnwidth]{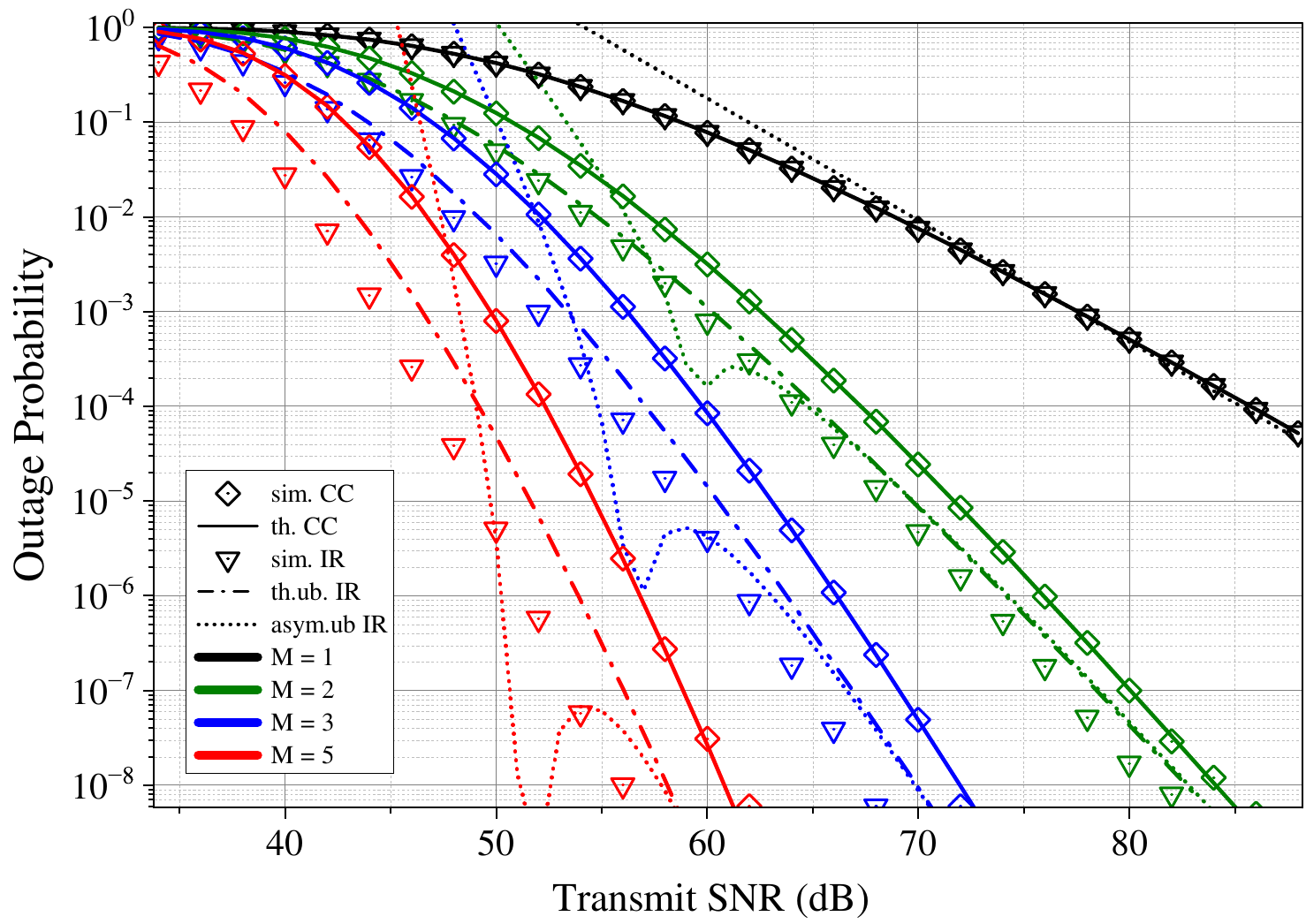}
\caption{OP of HARQ-CC and HARQ-IR versus transmit SNR, $\bar{\gamma}$, for different maximum numbers of HARQ transmissions $M$.}
\vspace{-0.1cm}
\label{OP_HARQ}
\end{figure}

In Fig. \ref{OP_HARQ}, the Rx’s OP versus the transmit SNR is illustrated for various maximum numbers of transmissions $M$
for both HARQ-CC and HARQ-IR. For the simulation-based outage evaluation, $10^9$ Monte Carlo samples were generated for each SNR point. In the case of HARQ-CC, the simulated curves closely follow the theoretical outage expression given by \eqref{Out_CC} across the full SNR range, validating the theoretical analysis. In the case of HARQ-IR, the derived upper bound follows the Monte Carlo trends and remains relatively close to the simulation results, especially for smaller values of $M$, e.g., $M=2$. For larger values of $M$, the bound becomes more conservative, while still capturing the correct outage behavior and high-SNR slope. It can also be observed that the asymptotic upper bound of \eqref{high-snr-ir} captures the correct high-SNR trend. However, it becomes accurate only at relatively high SNR. This happens since the dominant asymptotic term contains a logarithmic correction factor, as also discussed in Proposition 5. It is also evident that HARQ-IR outperforms HARQ-CC by achieving a lower OP, and this advantage becomes more pronounced as $M$ increases. For $M=1$, the two HARQ variants reduce to the same single-shot transmission case, which serves as consistency check. For $M>1$, the nearly parallel high-SNR slopes of HARQ-CC and HARQ-IR for the same $M$ are consistent with the analytical result in Remarks 1 and 2, where both schemes are shown to achieve the same diversity order. Hence, the advantage of HARQ-IR appears mainly as a coding-gain improvement. This behavior reflects the more efficient accumulation of mutual information in HARQ-IR across retransmission rounds. Finally, increasing 
$M$ steepens the outage decay for both schemes, in agreement with the derived diversity-order behavior.

\begin{figure}[h]
\centering
\vspace{-0.1cm}
\includegraphics[width=0.85\columnwidth]{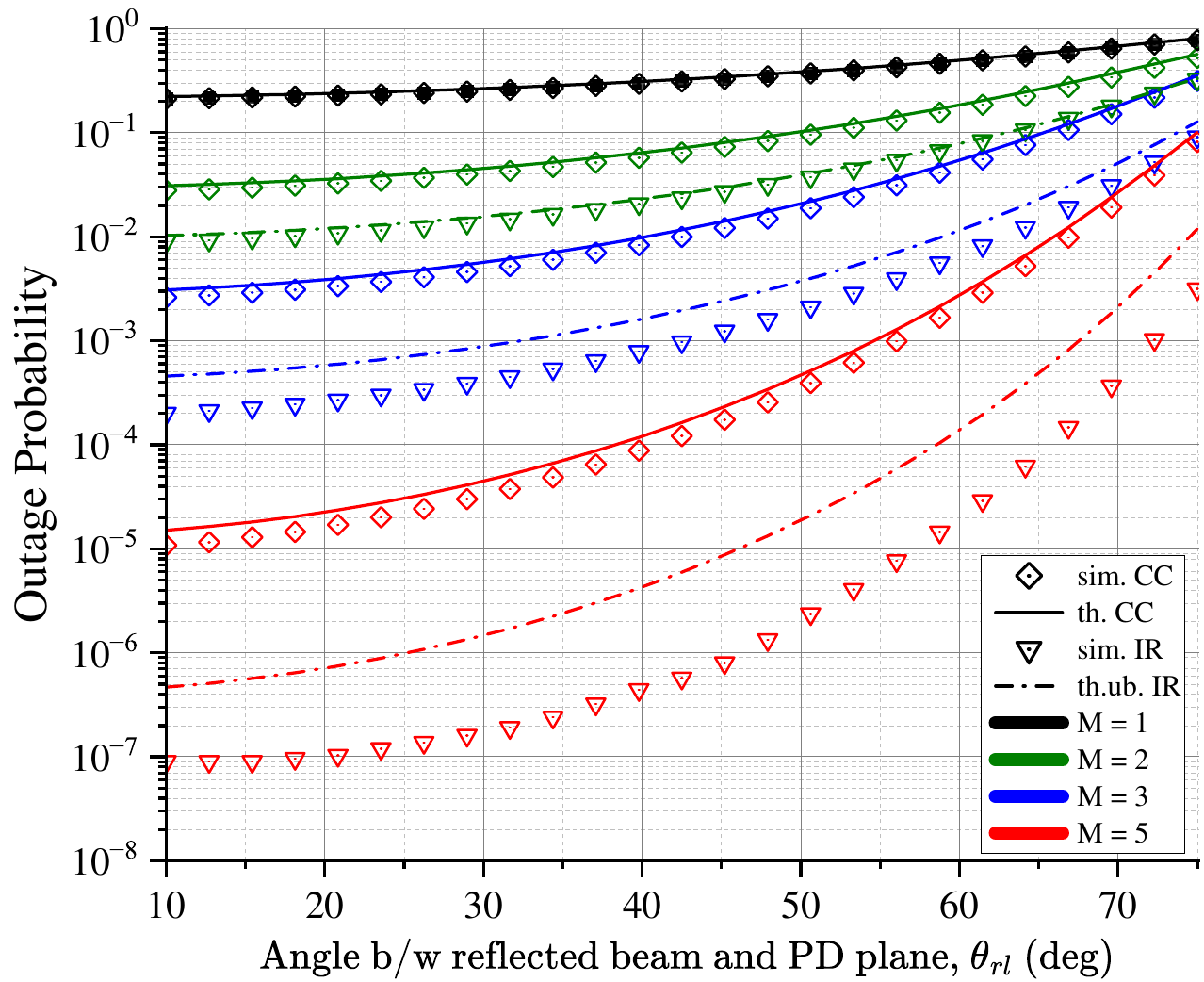}
\caption{OP of HARQ-CC and HARQ-IR versus $\theta_{rp}$, for different maximum HARQ transmissions $M$ with transmit SNR, $\bar{\gamma} = 50$ dB.}
\vspace{-0.1cm}
\label{theta}
\end{figure}

In Fig.~\ref{theta}, we investigate the effect of the ORIS reflection geometry by plotting the OP versus the angle between the reflected beam and the PD plane, $\theta_{rp}$. To emphasize the impact of severe pointing errors, we set the sway levels to $(\sigma_s,\sigma_r,\sigma_p)=(3a_l,3a_l,3a_l)$, while the remaining angular parameters are fixed as $\theta_i=\theta_r=\pi/6$. The angle $\theta_{rp}$ directly affects both the collected-power coefficient $A_0$ and the pointing-error parameter $\omega$. As $\theta_{rp}$ increases, the reflected beam is projected more obliquely onto the PD aperture, which reduces the effective collected power and increases the sensitivity to misalignment. As a result, the OP increases for both HARQ-CC and HARQ-IR.
The degradation becomes particularly pronounced for large values of $\theta_{rp}$, i.e., as the reflected beam approaches a grazing incidence with respect to the PD plane. In this regime, $\cos(\theta_{rp})$ becomes small, which strongly affects the projected beam footprint and increases the effective misalignment variance at the receiver plane. Consequently, even small structural displacements can produce severe pointing losses, causing the OP to rise rapidly as $\theta_{rp}$ approaches $90^\circ$. The model should therefore be interpreted over the physically meaningful range $\theta_{rp}<90^\circ$, since beyond this limit the reflected beam no longer impinges on the PD.
Nevertheless, HARQ-IR consistently outperforms HARQ-CC due to its more efficient accumulation of mutual information, while larger values of $M$ improve reliability through temporal diversity. The theoretical HARQ-CC curves closely follow the corresponding Monte Carlo simulations. For HARQ-IR, the derived upper bound captures the simulation trend over the considered angular range, although the gap between the bound and the Monte Carlo results becomes more pronounced for larger values of the maximum number of HARQ rounds, e.g. $M=5$. Therefore, the results validate the HARQ-CC analysis and show that the HARQ-IR bound provides an analytical upper-bound characterization of the geometry-dependent outage behavior.

\begin{figure}[h]
\centering
\vspace{-0.1cm}
\includegraphics[width=0.85\columnwidth]{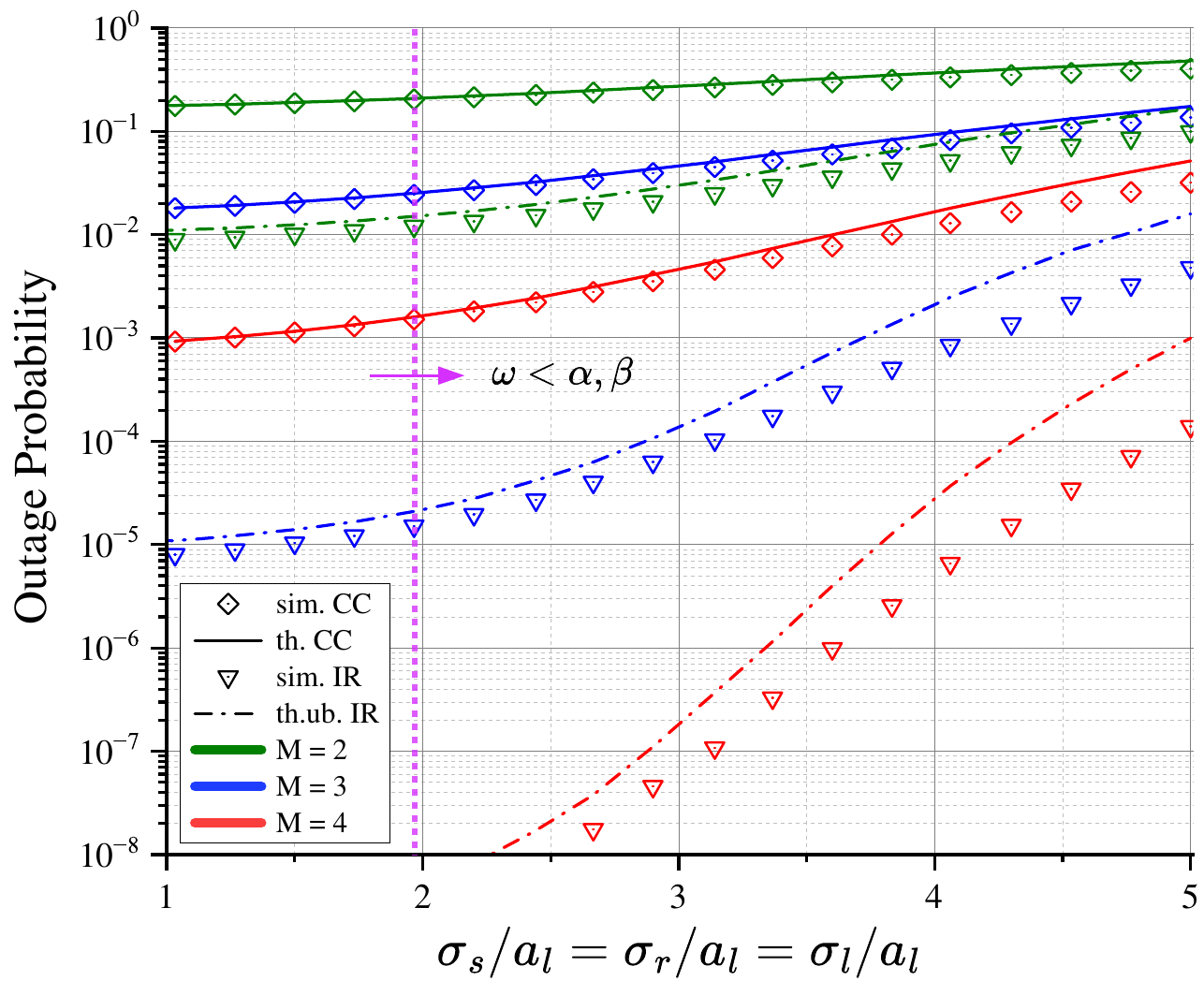}
\caption{OP of HARQ-CC and HARQ-IR versus common sway level for the Tx, Rx, and ORIS. Different numbers of maximum transmissions are plotted under fog conditions with transmit SNR, $\bar{\gamma} = 60$ dB.}
\vspace{-0.1cm}
\label{sigma}
\end{figure}

In Fig. \ref{sigma}, we investigate the impact of pointing errors on the OP by varying the normalized common sway level of the Tx, ORIS, and Rx. To emphasize the role of misalignment, the results are extracted under fog conditions with $\sigma=10\,\mathrm{dB/km}$ and $C_n^2=0.5\times10^{-14}\,\mathrm{m}^{-2/3}$. Under this configuration, turbulence is relatively weak, allowing the effect of structural sway and pointing errors to be more clearly observed.
As the common sway level increases, the pointing-error parameter $\omega$ decreases, which leads to stronger misalignment-induced attenuation and, consequently, higher OP. The vertical pink line, located slightly before $\sigma_r=2 a_l$, marks the transition point where pointing errors become the dominant impairment, i.e., where $\omega < \min\{\alpha, 
\beta\}$. As described in Remarks 1 and 2, beyond this point, the OP is mainly governed by pointing errors, and increases rapidly as the sway level further increases .
The figure also shows that in some regimes, HARQ-IR with a smaller retransmission budget can even outperform HARQ-CC with a larger one. For example, HARQ-IR with $M=3$ can achieve a lower OP than HARQ-CC with $M=4$. This illustrates that the coding-gain advantage of HARQ-IR can partly compensate for using fewer retransmission rounds.
This advantage becomes more pronounced under harsher propagation conditions. In these regimes, each retransmission is more valuable because the received information per round is more strongly impaired, and HARQ-IR exploits the available retransmission budget more efficiently. 
Finally, the agreement between the analytical curves and Monte Carlo simulations remains satisfactory over the plotted range, even when $\omega$ takes relatively small values. Some deviations appear only for very large jitter variances, corresponding to the smallest values of $\omega$, where the approximation becomes less accurate. Nevertheless, the analytical model still captures the dominant outage trend and the rapid degradation that occurs once the system enters the pointing-error-limited regime.

\begin{figure}[h]
\centering
\vspace{-0.1cm}
\includegraphics[width= 0.85\columnwidth]{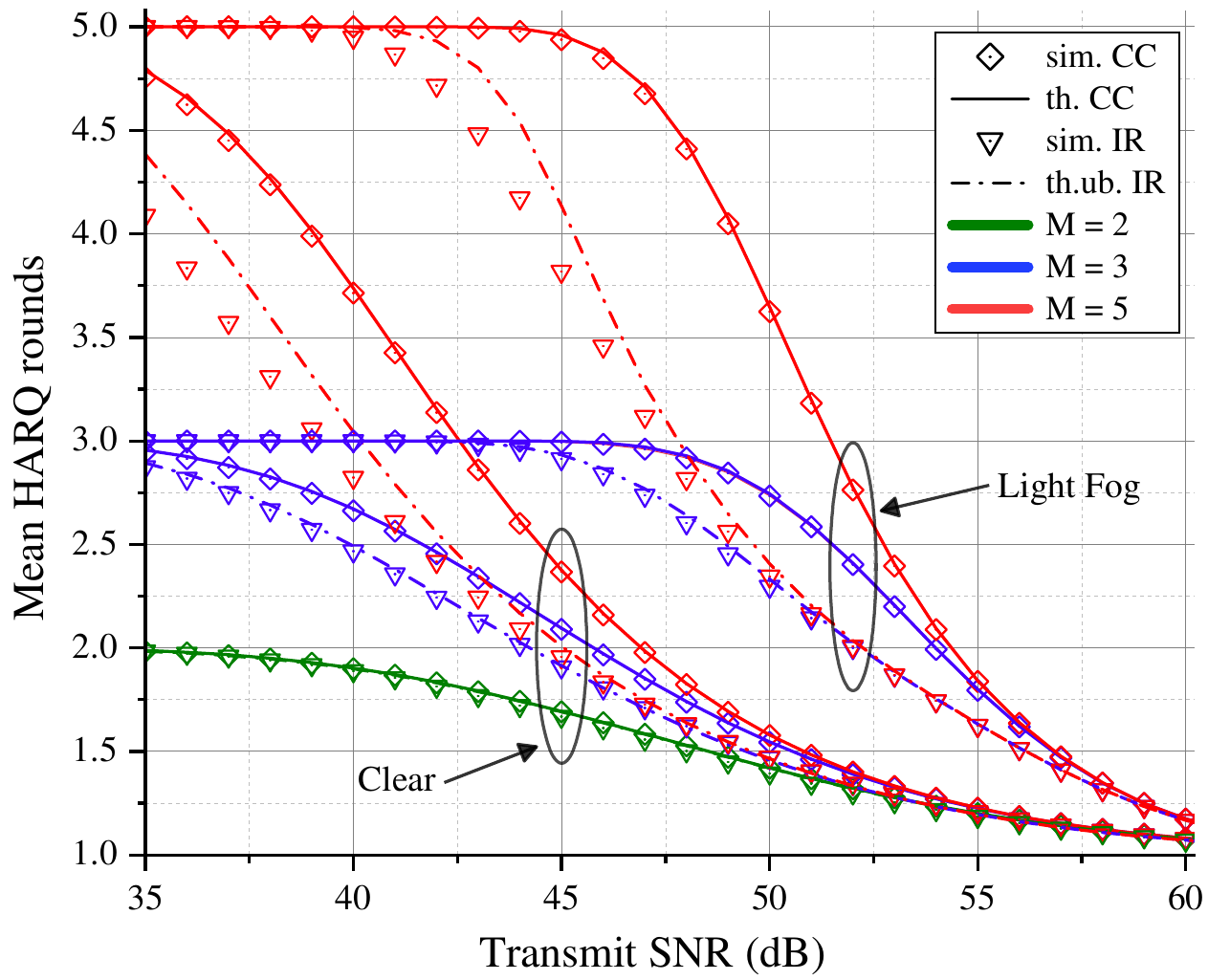}
\caption{Mean number of HARQ rounds versus transmit SNR $\bar{\gamma}$ for HARQ-CC and HARQ-IR, for different maximum number of transmissions $M$, under clear and light-fog conditions.}
\vspace{-0.1cm}
\label{E_D}
\end{figure}

In Fig. \ref{E_D}, we present both simulation and analytical results for the mean number of HARQ rounds versus the transmit SNR, until the transmitted message is correctly decoded or the maximum number of rounds $M$ is reached at the Rx. The results are extracted for different values of $M$ and under two atmospheric conditions, namely clear weather and light fog. The clear-weather case follows the baseline parameters in Table~I, whereas the light-fog case is modeled by using an attenuation coefficient of $\sigma=5\,\mathrm{dB/km}$ and a refractive-index structure parameter of $C_n^2=1\times10^{-14}\,\mathrm{m}^{-2/3}$.
In the low-SNR regime, successful decoding is unlikely before the retransmission budget is exhausted. Therefore, the mean number of rounds remains close to its corresponding maximum value $M$. As the transmit SNR increases, successful decoding occurs earlier and the mean delay decreases toward one round. This behavior is consistent with the outage trends in Fig. 4, where harsher propagation conditions make each additional retransmission round more valuable. For the same $M$ and given weather conditions, HARQ-IR enters the low-delay region earlier than HARQ-CC, showing that incremental redundancy not only improves reliability but also reduces the average number of required transmissions. This gain is small for $M=2$, where the retransmission budget is limited and HARQ-IR cannot fully exploit its incremental-redundancy advantage. By contrast, as $M$ increases, the performance gap becomes progressively more visible, reflecting the more efficient accumulation of mutual information in HARQ-IR across multiple retransmission rounds.

\begin{figure}[h]
\centering
\includegraphics[width= 0.86\columnwidth]{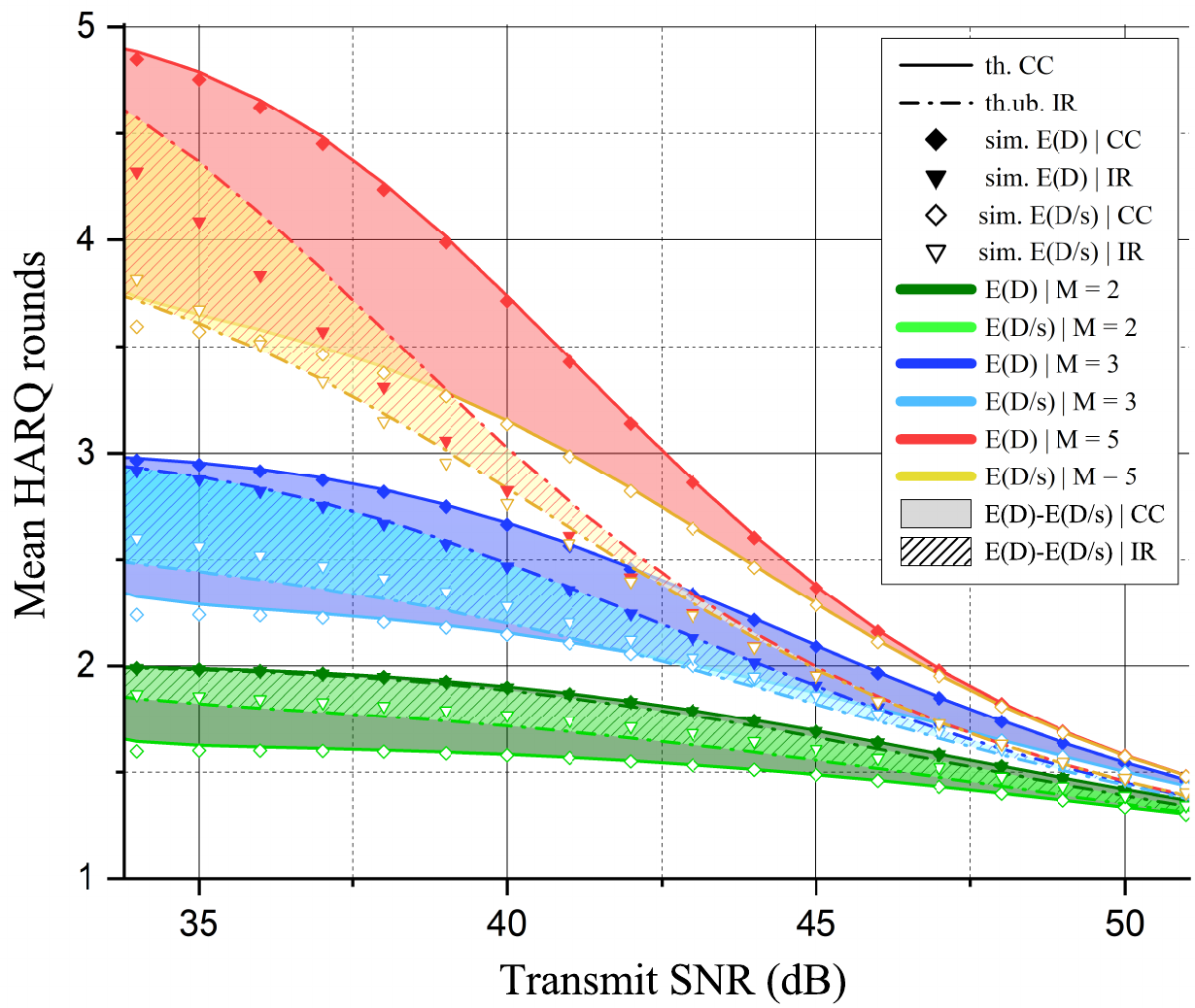}
\caption{Unconditional and conditional mean HARQ delay metrics
versus transmit SNR, for various $M$. The
shaded regions depict
$\mathbb{E}[D]-\mathbb{E}[D\mid\mathrm{success}]$.}
\label{E_D_suc}
\end{figure}

In Fig.~\ref{E_D_suc}, we jointly illustrate the unconditional mean number of HARQ rounds, $\mathbb{E}[D]$, and the conditional mean number of HARQ rounds given successful decoding, $\mathbb{E}[D\mid \mathrm{success}]$, as functions of the transmit SNR. The solid curves correspond to $\mathbb{E}[D]$, whereas the dashed curves correspond to $\mathbb{E}[D\mid \mathrm{success}]$.
The shaded regions quantify the difference between these two metrics. Specifically, the darker shaded regions enclosed by the corresponding HARQ-CC curves represent
$\Delta_D^{\mathrm{CC}}
=\mathbb{E}[D]_{\mathrm{CC}}-
\mathbb{E}[D\mid\mathrm{success}]_{\mathrm{CC}}$,
whereas the brighter shaded regions with line patterns enclosed by the HARQ-IR curves represent
$\Delta_D^{\mathrm{IR}}
=\mathbb{E}[D]_{\mathrm{IR}}-
\mathbb{E}[D\mid\mathrm{success}]_{\mathrm{IR}}$.
Therefore, these shaded regions provide a measure of the excess average round consumption associated with packet attempts that are ultimately dropped after the $M$-round limit is reached.
First, we observe that, for some retransmission budgets, especially $M=2$ and $M=3$, $\mathbb{E}[D\mid\mathrm{success}]_{\mathrm{IR}}$  can be  larger than that of HARQ-CC. This does not contradict the reliability advantage of HARQ-IR, but rather reflects the conditional nature of this metric. HARQ-IR can recover additional packets that would remain in outage under HARQ-CC, and some of these packets are decoded only in later retransmission rounds. Hence, the successful-packet ensemble of HARQ-IR may include more "challenging" channel realizations, which increase the conditional mean number of rounds. For larger budgets, such as $M=5$, HARQ-IR has more opportunities to exploit incremental redundancy over multiple rounds, so successful packets can be decoded more efficiently within the available retransmission window.
Moreover, at low SNR, both $\Delta_D^{\mathrm{CC}}$ and $\Delta_D^{\mathrm{IR}}$ are larger, indicating that a non-negligible fraction of the average round consumption is due to failed packet attempts. This effect becomes more intense as $M$ increases. Hence, although increasing $M$ improves reliability, it may also increase the retransmission effort spent on unsuccessful attempts when the SNR is too low.
Now, for the same retransmission budget, $\Delta_D^{\mathrm{CC}}$ is generally larger than $\Delta_D^{\mathrm{IR}}$, meaning that HARQ-CC spends a larger portion of its average round consumption on attempts that eventually fail.
As the transmit SNR increases, both shaded regions shrink since failed packet attempts become less frequent. 
Finally, we note that analytical values for $\mathbb{E}[D\mid\mathrm{success}]_{\mathrm{IR}}$ are obtained by substituting 
\eqref{up_bound_ir} into the conditional-delay expression. Thus, although \eqref{up_bound_ir} upper-bounds the OP, it is not guaranteed to upper-bound this nonlinear conditional delay ratio. Nevertheless, Fig.~\ref{E_D_suc} shows close agreement with the Monte Carlo results for all considered values of \(M\), indicating that the analytical bound remains sufficiently accurate for evaluating this metric.
\vspace{-0.1cm}
\section{Conclusions}\label{conc}
This paper investigated the reliability–delay performance of HARQ-aided ORIS-assisted FSO links. Although ORIS enables a virtual LoS path around blockages, the resulting link remains sensitive to atmospheric turbulence, path loss, and compounded e2e misalignment. Using the proposed statistical characterization of the Tx–ORIS–Rx channel, validated through Monte Carlo simulations and KS goodness-of-fit tests, analytical outage expressions were derived for HARQ-CC, while closed-form OP upper bounds were obtained for HARQ-IR. The high-SNR analysis and numerical results confirmed that both schemes achieve the same diversity order, which shows that the outage slope is governed by the strongest impairment among turbulence and pointing errors. Thus, in some cases, improving alignment may be more effective than only increasing transmit power or the retransmission budget. Numerical results showed that increasing $M$ improves outage performance through temporal diversity, while HARQ-IR outperforms HARQ-CC mainly due to its coding-gain advantage. The proposed delay analysis further showed that reliability improvements come with a retransmission cost: at low SNR, most packets consume the full HARQ budget, whereas at high SNR, the mean number of rounds approaches one. Finally, the gap between the unconditional and conditional mean number of rounds quantifies the retransmission effort spent on packets that ultimately fail, highlighting the need to jointly design the HARQ strategy, retransmission budget, ORIS geometry, and pointing stability.

 \vspace{-0.2cm} 
 \section*{Appendix A}
\label{app:thereom1}

 At first, combining \cite[(8.485)]{Gradshteyn2014} with \cite[(8.445)]{Gradshteyn2014} for $\alpha-\beta \notin \mathbb{Z}$, we can rewrite the Bessel function $K_{\alpha-\beta}\left(\cdot\right)$ in (\ref{f_ha}), in the following form: 
\begin{equation}
    K_{\alpha-\beta}\left(2 \sqrt{\alpha \beta h_a}\right)=\frac{1}{2} \mathrm{G}_{0,2}^{2,0}\left(\alpha \beta h_a \left\lvert \begin{array}{c}- \\ \frac{\alpha-\beta}{2} ,\frac{\beta-\alpha}{2}\!\!\!\end{array}\right.\right),
    \label{K_0}
\end{equation}
where we define $\chi = \frac{h}{A_0 h_l}$ for convenience. By utilizing (\ref{K_0}), as well as interchanging the order of summation and integration, which is justified by the existence of the PDF \footnote{Both $f_{h_a}(h_a)$ and $f_{h_g}(h_g)$ are well-defined and integrable over their respective domains \cite{Najafi_new}.} in (\ref{f_h}), we get
\begin{align}
    f_h\left(h\right) &= \mathcal{C} \chi^{\omega-1} \int_{\chi}^{\infty}  h_a^{\frac{a+b}{2}-\omega-1}\ln^{-\frac{1}{2}}\left(\frac{h_a}{\chi}\right) 
    \nonumber\\ & \times \mathrm{G}_{0,2}^{2,0}\left(\alpha \beta h_a \mathrel{\Big|} \begin{array}{c}- \\ \!\!\! \frac{\alpha-\beta}{2} ,\frac{\beta-\alpha}{2}\!\!\!\end{array}\right) \mathrm{d}h_a,
    \label{f_h_v2}
\end{align}
where 
\begin{equation}
    \mathcal{C} = \frac{\sqrt{\omega}\left(\alpha \beta\right)^{\frac{\alpha+\beta}{2}} }{\sqrt{\pi}A_0 h_l \Gamma(\alpha)\Gamma(\beta)}.
\end{equation}

Next, using the definition of the Meijer-$G$ function through Mellin–Barnes integral \cite[(9.301)]{Gradshteyn2014}, we can express the Meijer-$G$ function in (\ref{f_h_v2}) as
\begin{align}
    &\mathrm{G}_{0,2}^{2,0} \left(\alpha \beta h_a \mathrel{\Big|} \begin{array}{c}- \\ \!\!\! \frac{\alpha-\beta}{2} ,\frac{\beta-\alpha}{2}\!\!\!\end{array}\right) = \frac{1}{2\pi i} \int_{L} \Gamma\left(\frac{\alpha-\beta}{2}-s\right)
    \nonumber\\ & \times \Gamma\left(\frac{\beta-\alpha}{2}-s\right) \left(\alpha \beta h_a\right)^s \mathrm{d}s ,
    \label{MeijerG_repr}
\end{align}
where $L$ denotes the Mellin-Barnes integration contour in the complex $s$-plane.
By firstly substituting (\ref{MeijerG_repr}) in (\ref{f_h_v2}) and then interchanging the order of integration between the two integrals, we obtain
\begin{align}
    f_h\left(h\right) &=\mathcal{C} \chi^{\omega-1} 
    \nonumber\\ & \times \frac{1}{2 \pi i} \int_L \Gamma\left(\frac{\alpha-\beta}{2}-s\right) \Gamma\left(\frac{\beta-\alpha}{2}-s\right) (\alpha \beta )^s  \nonumber\\ & \times \underbrace{\left[\int_\chi^{\infty} h_a^{s+\frac{\alpha+\beta}{2}-\omega-1} \ln^{-\frac{1}{2}}\left(\frac{h_a}{\chi}\right) \mathrm{d}h_a\right]}_{I} \mathrm{d}s.
    \label{f_h_v3}
\end{align}
Consequently, if we perform the change of variables $y = \ln\left(h_a/\chi\right) $, the underlined integral in (\ref{f_h_v3}) can be evaluated as: 
\begin{align}
    I &= \chi^{s + \frac{\alpha+\beta}{2}-\omega }\int_0^{\infty} e^{y\left(s+\frac{\alpha+\beta}{2}-\omega\right)} y^{-\frac{1}{2}} \mathrm{d} y 
    \nonumber\\ &  = \frac{\sqrt{\pi}\, \chi^{s + \frac{\alpha+\beta}{2}-\omega}}{\left(\omega-s-\frac{\alpha+\beta}{2}\right)^{\frac{1}{2}}},
    \label{I}
\end{align}
where $\Gamma\left(\frac{1}{2}\right)= \sqrt{\pi}$.

Note that the convergence of \eqref{I} is ensured, regardless of the values of $\alpha$, $\beta$, and $\omega$, by the proper choice of the contour $L$ \cite{Luke}. Specifically, $L$ is taken as a vertical line $s=\delta+jt$, $t\in(-\infty,\infty)$, located to the left of the poles of the Gamma functions appearing in \eqref{f_h_v3} . Hence, $\delta$ is chosen to be smaller than the first pole of these sequences, i.e., $\delta<\omega-(\alpha+\beta)/2$, which guarantees the convergence of \eqref{I}.

Now, utilizing the Stirling's approximation for the Gamma function, which is given by
\begin{align}
   \Gamma(z) \sim \sqrt{2 \pi} e^{-z} z^{z-\frac{1}{2}} \text { for }|\arg (z)|<\pi,
\end{align}
and also assuming that $|1 /(2 z)| \ll 1$ and performing logarithmic transformations, we arrive at 
\begin{align}
    \frac{\Gamma(z)}{\Gamma\left(z+\frac{1}{2}\right)} = \frac{1}{\sqrt{z}}\left(1+\frac{1}{8 z}+O\left(\frac{1}{z^2}\right)\right),\,\,\, z \in \mathbb{C} \, ,
    \label{stirling_approx}
\end{align}
where $O(\cdot)$ denotes the Big-O asymptotic notation.
Thus, we can express \eqref{I} as
\begin{align}
    I \cong \frac{\sqrt{\pi}\, \chi^{s + \frac{\alpha+\beta}{2}-\omega} \, \Gamma\left(\omega-s-\frac{\alpha+\beta}{2}\right)}{\Gamma\left(\omega-s-\frac{\alpha+\beta}{2}+\frac{1}{2}\right)},
    \label{approxi}
\end{align}
under the typically satisfied condition
\(|\omega-s-\frac{\alpha+\beta}{2}| \gg 1\), particularly when $\omega \gg 1$. Since the
Mellin-Barnes contour is \(s=\delta+it\), the approximation is increasingly accurate when a valid contour is chosen such that
\(\omega-\delta-\frac{\alpha+\beta}{2}\gg 1\), while also satisfying
the convergence condition
\(\delta<\omega-\frac{\alpha+\beta}{2}\).
Now, substituting \eqref{approxi} in \eqref{f_h_v3}, we get 
\begin{align}
    &f_h\left(h\right) =\sqrt{\pi}\, \mathcal{C} \chi^{\frac{\alpha+\beta}{2}-1} \frac{1}{2 \pi i}\int_{L} (\alpha \beta \chi)^s
    \nonumber\\ & \times\!\frac{\Gamma\left(\frac{\alpha-\beta}{2}-s\right) \Gamma\left(\frac{\beta-\alpha}{2}-s\right)\Gamma\!\left(\omega-s-\frac{\alpha+\beta}{2}\right)}{\Gamma\!\left(\omega-s-\frac{\alpha+\beta}{2}+ \frac{1}{2}\right)} \mathrm{d}s.
    \label{poles}
\end{align}

The integral in (\ref{poles}) resembles the definition of Meijer-$G$ function through the Mellin-Barnes integral and as a result (\ref{poles}) becomes
\begin{align}
    f_h\left(h\right) \!=\!\sqrt{\pi} \, \mathcal{C} \chi^{\frac{\alpha+\beta}{2}-1}\, 
    \mathrm{G}_{1,3}^{3,0}\!\left(\!\alpha \beta \chi \! \mathrel{\Bigg|} \begin{array}{c}
	   \!\!\! \omega - \frac{\alpha+\beta}{2}+\frac{1}{2} \\ \!\!\! \frac{\alpha-\beta}{2}, \frac{\beta-\alpha}{2},\, \omega-\frac{\alpha+\beta}{2}
	\end{array}\!\!\!\right).
    \label{f_h_v4}
\end{align}
Finally, by substituting $\mathcal{C}$ and $\chi$ in (\ref{f_h_v4}) and then utilizing \cite[(9.31.5)]{Gradshteyn2014}, \eqref{eq:pdf} results.

\vspace{-0.1cm}
\section*{Appendix B}
First, we define the sum of SNR's up to the $M$-th round as 
\begin{equation}
    \mathcal{S}_M = \sum^{M}_{m=1} \gamma_m,
\end{equation}
then we proceed by calculating the moment generation function (MGF) of $\gamma_m$, i.e., $\mathcal{M}_{\gamma_m}\left(t\right)$, as 
\begin{align}
\mathcal{M}_{\gamma_m}\left(t\right) = \int_0^\infty e^{-tx} f_{\gamma_m}\left(x\right)dx.
\label{MGF}
\end{align}
Considering the Mellin-Barnes integral form of the Meijer-$G$ function in \eqref{eq:pdf_snr}, we can express \eqref{MGF} as 
\begin{align}
   &\mathcal{M}_{\gamma_m} \!\left(t\right) \!=\!  \frac{\sqrt{\omega}}{4\pi i\Gamma(\alpha) \Gamma(\beta)} \int_{L} \mathcal{\kappa}_{\bar{\gamma}}^s \frac{\Gamma\left(\alpha-s\right)\Gamma\left(\beta-s\right)\Gamma\left(\omega-s\right)}{\Gamma\left(\omega-s+\frac{1}{2}\right)} \nonumber \\ & \times \left(\int_0^\infty \!\! x^{\frac{s}{2}-1} e^{-tx} dx\right) \! \mathrm{d}s,
   \label{MGF_1}
\end{align}
where $\mathcal{\kappa}_{\bar{\gamma}} = \displaystyle \frac{\alpha \beta }{\sqrt{\bar{\gamma}}A_{0} h_l}$.
The inner integral can be calculated utilizing \cite[(3.381.4)]{Gradshteyn2014} as 
\begin{equation}
  \int_0^\infty x^{\frac{s}{2}-1} e^{-tx} \mathrm{d}x = \frac{1}{t^{s/2}} \Gamma\left(\frac{s}{2}\right).
  \label{int_s}
\end{equation}
Substituing \eqref{int_s} in \eqref{MGF_1} and applying the Mellin-Barnes definition of the Fox-$H$ function \cite[(1.1.1)]{H-transforms}, we arrive at:
\begin{align}
   \mathcal{M}_{\gamma_m}(t)=\frac{\sqrt{\omega}}{2\Gamma(\alpha)\Gamma(\beta)}{H}_{2,3}^{3,1}\!\left[\frac{\kappa_{\bar{\gamma}}}{\sqrt{t}}\,\Bigg|
\begin{array}{l}
\left(1,\frac{1}{2}\right),\left(\frac{1}{2}+\omega,1\right)\\
(\alpha,1),(\beta,1),(\omega,1)
\end{array}\!\!\!\right].
\label{Fox_1}
\end{align}
In order to extract the PDF of $\mathcal{S}_M$, we calculate the inverse Laplace transform of the product of MGFs in \eqref{Fox_1}, using the Mellin's inverse formula, as 
\begin{equation}
    f_{\mathcal{S}_M}(z)=\mathcal{L}^{-1} \left\{ \prod_{m=1}^M \mathcal{M}_{\gamma_m}(t)\right\}.
    \label{inverse_laplace}
\end{equation}
Thus, we substitute \eqref{MGF_1} in \eqref{inverse_laplace} and interchange the order of integration in order to express \eqref{inverse_laplace} as
\begin{align}
    & f_{\mathcal{S}_M}(z) = \Bigg( \frac{\sqrt{\omega}}{2\Gamma(\alpha) \Gamma(\beta)} \Bigg)^M \nonumber \\
    & \!\times \prod_{m=1}^M \frac{1}{2\pi i}\int_{L_m} \mathcal{\kappa}_{\bar{\gamma}}^{s_m} \frac{\Gamma\left(\alpha-s_m\right)\Gamma\left(\beta-s_m\right)\Gamma\left(\omega-s_m\right) \Gamma\left(\frac{s_m}{2}\right)}{\Gamma\left(\omega-s_m+\frac{1}{2}\right)}
     \nonumber \\
    & \!\times \Bigg( \frac{1}{2\pi i}\int_{\mathcal{T}} e^{zt}\, t^{-\frac{1}{2}\sum_{m=1}^M s_m}\mathrm{d}t\Bigg)\mathrm{d}s_m,
    \label{Multi_int}
\end{align}
where $L_m$ is an infinite contour in the complex $s_m$-plane, such that the integrand in \eqref{Multi_int} has no singularities. Also, $\mathcal{T}$ refers to  the contour path of the inverse Laplace transform integration,  so that $\mathcal{T}$ is in the region of convergence of $ t^{-\sum_{m=1}^M s_m/2}$. To calculate the inner integral, we utilize \cite[(8.315.1)]{Gradshteyn2014} as
\begin{equation}
    \int_{\mathcal{T}} t^{-\frac{1}{2}\sum_{m=1}^M s_m} \,e^{zt}\mathrm{d}t = \frac{2\pi i}{\Gamma\left(\frac{1}{2}\sum_{m=1}^M s_m\right)} \left(\frac{1}{z}\right)^{1-\frac{1}{2}\sum_{m=1}^M s_m}.
    \label{T}
\end{equation}
Substituting \eqref{T} in \eqref{Multi_int} and then applying the definition of $M$-variate Fox-$H$ function \cite[(A.1)]{H-function_Mathai}, we derive \eqref{Multi_Fox_PDF} shown at the top of the next page. 

In similar manner, the CDF of the $M$-th sum $\mathcal{S}_M$ is obtained as  $F_{\mathcal{S}_M}(z) = \int_{0}^{z} f_{\mathcal{S}_M}(x)\mathrm{d}x$ \footnote{Alternatively the inverse Laplace transform $F_{\mathcal{S}_M}(z)=\mathcal{L}^{-1} \left\{\prod_{m=1}^N \frac{M_{\gamma_m}(t)}{t}\right\}$ can be used to derive the CDF.}. Substituting \eqref{T} in \eqref{Multi_int} and interchanging the order of the integrals, we can express $F_{\mathcal{S}_M}(z)$ as 
\begin{align}
    & F_{\mathcal{S}_M}(z) = \Bigg( \frac{\sqrt{\omega}}{2\Gamma(\alpha) \Gamma(\beta)} \Bigg)^M \nonumber \\
    & \! \times \! \prod_{m=1}^M \frac{1}{2\pi i}\! \int_{L_m} \!\!\!\mathcal{\kappa}_{\bar{\gamma}}^{s_m} \frac{\Gamma\left(\alpha-s_m\right)\Gamma\left(\beta-s_m\right)\Gamma\left(\omega-s_m\right) \Gamma\left(\frac{s_m}{2}\right)}{\Gamma\left(\omega-s_m+\frac{1}{2}\right)\Gamma\left(\frac{1}{2}\sum_{m=1}^M s_m\right)}
     \nonumber \\
    & \! \times \Bigg(\int_{0}^z \left(\frac{1}{x}\right)^{1-\frac{1}{2}\sum_{m=1}^M s_m} \mathrm{d}x\Bigg)\mathrm{d}s_m.
    \label{Multi_int_cdf}
\end{align}
The inner integral in \eqref{Multi_int_cdf} is calculated as 
\begin{equation}
    \int_{0}^z \!\left(\frac{1}{x}\right)^{\!1-\frac{1}{2}\sum_{m=1}^M s_m} \mathrm{d}x \!=\! \frac{\Gamma\left(\frac{1}{2}\sum_{m=1}^M s_m\right)}{\Gamma\left(\frac{1}{2}\sum_{m=1}^M s_m+1\right)} z^{\frac{1}{2}\sum_{m=1}^M s_m},
    \label{integral_final}
\end{equation}
where we made use of $\frac{\Gamma\left(\frac{1}{2}\sum_{m=1}^M s_m\right)}{\Gamma\left(\frac{1}{2}\sum_{m=1}^M s_m+1\right)} = \frac{1}{\frac{1}{2}\sum_{m=1}^M s_m}$. Then, substituting \eqref{integral_final} in \eqref{Multi_int_cdf} and applying simple algebraic manipulations, while exploiting the definition of the $M$-variate Fox-$H$ function \cite[(A.1)]{H-function_Mathai}, we arrive at \eqref{Multi_Fox_CDF}. Finally, the OP for HARQ-CC case is given by $P_{o,M}^{\mathrm{cc}} = F_{\mathcal{S}_M}\left(\frac{2^{2\mathcal{R}_t}-1}{c}\right)$, or equivalently \eqref{Out_CC}. This concludes the proof.
\label{Harq-cc app}

\begin{figure*}[!t]
    \begin{equation}
f_{\mathcal{S}_M}(z) = \Bigg( \frac{\sqrt{\omega}}{2\Gamma(\alpha) \Gamma(\beta)} \Bigg)^M z^{-1}
H\underbrace{_{0,1;\,2,3;\,\dots;\,2,3}^{{0,0;\,3,1;\,\dots;\,3,1}}}_{M+1}
\left[
  \begin{matrix}
   {\mathcal{\kappa}_{\bar{\gamma}} \sqrt{z}}\, ^{(1)}\\
    \vdots\\
    {\mathcal{\kappa}_{\bar{\gamma}} \sqrt{z}}\, ^{(M)}
  \end{matrix}
  \mathrel{\Bigg|}  
  \begin{matrix}
    - \\
    \big(1\, ; \underbrace{\frac{1}{2},\dots,\frac{1}{2}}_M \big)
  \end{matrix}
   \mathrel{\Bigg|} 
   \begin{matrix}
    \left\{\left(1,\frac{1}{2}\right), \left(\frac{1}{2}+ \omega,1\right)\right\}_1^M\\
    \left\{(\alpha,1),(\beta,1),(\omega,1)\right\}_1^M
  \end{matrix}
\right]
\label{Multi_Fox_PDF}
\end{equation}
\hrulefill
\begin{equation}
F_{\mathcal{S}_M}(z) = \Bigg( \frac{\sqrt{\omega}}{2\Gamma(\alpha) \Gamma(\beta)} \Bigg)^M
H\underbrace{_{0,1;\,2,3;\,\dots;\,2,3}^{{0,0;\,3,1;\,\dots;\,3,1}}}_{M+1}
\left[
  \begin{matrix}
   {\mathcal{\kappa}_{\bar{\gamma}} \sqrt{z}}\, ^{(1)}\\
    \vdots\\
    {\mathcal{\kappa}_{\bar{\gamma}} \sqrt{z}}\, ^{(M)}
  \end{matrix}
    \mathrel{\Bigg|}  
  \begin{matrix}
    - \\
    \big(0\, ; \underbrace{\frac{1}{2},\dots,\frac{1}{2}}_M \big)
  \end{matrix}
   \mathrel{\Bigg|} 
   \begin{matrix}
    \left\{\left(1,\frac{1}{2}\right), \left(\frac{1}{2}+ \omega,1\right)\right\}_1^M\\
    \left\{(\alpha,1),(\beta,1),(\omega,1)\right\}_1^M
  \end{matrix}
\right]
\label{Multi_Fox_CDF}
\end{equation}
\hrulefill
\end{figure*}

\section*{Appendix C}
\label{Harq-ir app}
First, by utilizing the Minkowski inequality \cite{HARQ_FSO}, we have 
\begin{equation}
\left(\!\left(\prod_{m=1}^M \gamma_m\right)^{\frac{1}{M}}\!\!+1\!\right)^{\!\!M} \leq \prod_{m=1}^M\left(1+\gamma_m\right).
\end{equation}
Consequently, 
\begin{align}
    P_{o,M}^{\mathrm{ir}}\left(\mathcal{R}_t\right) & \le  \mathbb{P} \left\{M \log _2\left(1+\left(\prod_{m=1}^M c\gamma_m\right)^{\frac{1}{M}}\right) \leq 2\mathcal{R}_t\right\} \nonumber \\
    & = \mathbb{P} \left\{\prod_{m=1}^M \gamma_m \leq \left(\frac{2^{\frac{2\mathcal{R}_t}{M}}-1}{c}\right)^M\right\} = P_{o,M}^{\mathrm{ir,ub}}.
\end{align} 
The right-hand side serves as an upper bound on $P_{o,M}^{\mathrm{ir}}$ \cite{HARQ_FSO}.
Defining the product of SNR's up to the $M$-th HARQ round as
\begin{equation}
    \mathcal{Z}_M = \prod_{m=1}^M \gamma_m,
\end{equation}
we proceed with the calculation of the PDF of the product of $M=2$ channel SNR and then extend this to an arbitrary number $M$ of SNRs by induction. Specifically, the PDF of the product of two SNRs is calculated as
\begin{equation}
 f_{\mathcal{Z}_2}(z)=\int_{0}^{\infty} \frac{1}{x} f_{\gamma_1}(x) f_{\gamma_2}\left(\frac{z}{x}\right) \mathrm{d} x.
 \label{eq:14}
\end{equation}
After substituting \eqref{eq:pdf_snr} and applying the variable transformation $y = \sqrt{x}$ along with the Meijer-$G$ identity \cite[(07.34.21.0011.01)]{wolf}, then \eqref{int_z2} can be written as 
\begin{align}
    f_{\mathcal{Z}_2}(z) &= \frac{1}{2}\left(\frac{\sqrt{\omega}}{\Gamma(\alpha) \Gamma(\beta)}\right)^2 z^{-1} \nonumber \\ &\times\mathrm{G}_{2,6}^{6,0}\left( \left(\frac{\alpha \beta }{\sqrt{\bar{\gamma}}A_{0} h_l}\right)^2 \! \sqrt{z} \mathrel{\Bigg|} \begin{array}{c} \!\!\!
\omega+\frac{1}{2}, \omega+\frac{1}{2}  \\ \!\!\!
\alpha, \beta, \omega, \alpha, \beta, \omega
\end{array}\!\!\!\right).
    \label{res_z2}
\end{align}
By recurrently conducting this procedure, for $M = 3, 4, \dots,$ and taking into consideration the symmetries of the parameters appearing in Meijer-$G$ functions, we can prove that $f_{\mathcal{Z}_M}(z)$ is obtained as  
\begin{align}
    &f_{\mathcal{Z}_M}(z) = \frac{1}{2}\left(\frac{\sqrt{\omega}}{\Gamma(\alpha) \Gamma(\beta)}\right)^M z^{-1} \nonumber \\ &\times\mathrm{G}_{M,3M}^{3M,0}\left( \left(\frac{\alpha \beta }{\sqrt{\bar{\gamma}}A_{0} h_l}\right)^M \!\! \sqrt{z} \mathrel{\Bigg|} \begin{array}{c} \!\!\!
\left\{\omega+\frac{1}{2}\right\}_1^M  \\ \!\!\!
\left\{\alpha\right\}_1^M, \left\{\beta\right\}_1^M,  \left\{\omega\right\}_1^M
\end{array}\!\!\!\right).
    \label{res_zM}
\end{align}
 The CDF of the $M$-th product $\mathcal{Z}_M$ is obtained as $F_{\mathcal{Z}_M}(z) = \int_{0}^{z} f_{\mathcal{Z}_M}(x)\mathrm{d}x$. By first applying the substitution $t = \sqrt{x} $, then employing the identities provided in \cite[(26)]{algorithm} together with \cite[(9.31.5)]{Gradshteyn2014}, and finally conducting appropriate algebraic simplifications, the CDF can be formulated as shown in \eqref{res_CDF_zM}. Finally, the upper bound for the OP of HARQ-IR is given by $P_{o,M}^{\mathrm{ir,ub}} = F_{\mathcal{Z}_M}\left(\left(\frac{2^{\frac{2\mathcal{R}_t}{M}}-1}{c}\right)^{\!M}\right)$, or equivalently \eqref{up_bound_ir}. This concludes the proof.
 
\begin{figure*}[!t]
\vspace{-0.4cm}
\begin{align}
    f_{\mathcal{Z}_2}(z) = \frac{1}{z}\Bigg(\frac{\sqrt{\omega}}{2\Gamma(\alpha) \Gamma(\beta)}\Bigg)^2  \! \bigintssss_0^{\infty} \!\! x^{-1}  \mathrm{G}_{1,3}^{3,0}\left( \frac{\alpha \beta }{\sqrt{\bar{\gamma}}A_{0} h_l} \sqrt{x} \mathrel{\Bigg|} \begin{array}{c} \!\!\!
\omega+\frac{1}{2} \\ \!\!\!
\alpha, \beta, \omega
\end{array}\!\!\!\right) \mathrm{G}_{1,3}^{3,0}\left(\frac{\alpha \beta}{\sqrt{\bar{\gamma}}A_0 h_l } \sqrt{\frac{z}{x}}  \mathrel{\Bigg|}  \begin{array}{c}\!\!\!
\omega+\frac{1}{2} \\ \!\!\!
\alpha, \beta, \omega
\end{array}\!\!\!\right) dx
    \label{int_z2}
\end{align}
\hrulefill
\begin{align}
   F_{\mathcal{Z}_M}(z) = \Bigg(\frac{\sqrt{\omega}}{\Gamma(\alpha) \Gamma(\beta)}\Bigg)^M \mathrm{G}_{M+1,3M+1}^{3M,1}\left( \left(\frac{\alpha \beta }{\sqrt{\bar{\gamma}}A_{0} h_l}\right)^M \!\! \sqrt{z} \mathrel{\Bigg|} \begin{array}{c} \!\!\!
1, \left\{\omega+\frac{1}{2}\right\}_1^M  \\ \!\!\!
\left\{\alpha\right\}_1^M, \left\{\beta\right\}_1^M,  \left\{\omega\right\}_1^M,0
\end{array}\!\!\!\right)
    \label{res_CDF_zM} 
\end{align}
\hrulefill
\end{figure*}

\vspace{-0.1cm}
\section*{Appendix D}
\vspace{-0.1cm}
Regarding the code combining case, we consider the high-SNR regime where $\bar{\gamma} \rightarrow \infty$. In this regime, the arguments inside the $M$-variate Fox-$H$ function in \eqref{Out_CC} tend to zero. By defining $u = \frac{\alpha \beta }{A_{0} h_l} \sqrt{\frac{2^{2\mathcal{R}_t}-1}{c\bar{\gamma}}}$, we observe that as $u \rightarrow 0$, the asymptotic behavior of the Fox-$H$ function, based on \cite[(A.9)–(A.10)]{H-function_Mathai} in conjunction with \cite[Cor. 1.12.1]{H-transforms}, can be characterized as 
\begin{align}
&H \underbrace{_{0,1;\,2,3;\,\dots;\,2,3}^{{0,0;\,3,1;\,\dots;\,3,1}}}_{M+1}\Big[ u_1, u_2,\ldots,u_M \!\mathrel{\big|}\dots\mathrel{\big|}\dots\Big]  \nonumber 
\\ & = O\left(u_1^{q^{*}}\left(\ln u_1\right)^{M-1} u_2^{q^{*}}\left(\ln u_2\right)^{M-1} \! \cdots u_M^{q^{*}}\left(\ln u_M\right)^{M-1}\right),
\label{Big-O}
\end{align}
where $q^{*} = \min\left(\alpha,\beta,\omega\right)$ is the minimum value across the $b_q$ parameters/poles of \eqref{Big-O}.
Considering that $u =u_1 = u_2 = \ldots=u_M$, we can write \eqref{Big-O} as 
\begin{equation}
    H\Big[ u_1, u_2,\ldots,u_M\Big] \stackrel{u \rightarrow 0}{=}O \left(u^{Mq^{*}} \left(\ln u\right)^{M(M-1)}\right), 
\end{equation}
where $H\left[\cdot\right]$ follows the $M$-variate Fox-$H$ notation of \eqref{Out_CC}.
Correspondingly, based on \eqref{Out_CC}, we deduce that the OP for the HARQ-CC case, in the asymptotic limit of high SNR, 
OP scales as $u^{Mq^{*}} \left(\ln u\right)^{M(M-1)}$, as stated in \eqref{order_estimate}. This concludes the proof.

\vspace{-0.1cm}
\section*{Appendix E}
\vspace{-0.1cm}
In order to derive the high-SNR asymptotic behavior of the closed-form HARQ-IR upper bound in \eqref{up_bound_ir}, we first define
\begin{equation}
\Phi \triangleq
\frac{\alpha\beta}{A_0 h_l}
\sqrt{\frac{2^{\frac{2R_t}{M}}-1}{c\,\bar{\gamma}}}.
\end{equation}
As $\bar{\gamma}\to\infty$, we have $\Phi\to 0^+$  and thus the dominant term is determined by the minimum among the lower parameter set of Meijer-$G$ function in \eqref{up_bound_ir}, namely
$q^{*}=\min(\alpha,\beta,\omega)$.
So, the associated dominant pole at $s=-q^{*}$ is repeated $M$ times and therefore has multiplicity $M$. Using the repeated-pole expansion in \cite[(1.4.6)]{H-transforms}, a pole of multiplicity $M$ produces logarithmic terms up to order $M-1$. Hence, the dominant contribution is the term proportional to $\left(\ln \Phi^M\right)^{M-1}$. Consequently, as $\Phi\rightarrow 0$, the asymptotic Meijer-$G$ expansion is derived as
\begin{align}
&\mathrm{G}_{M+1,3M+1}^{3M,1}\left( \Phi^{M} \! \mathrel{|}\dots\right) \cong
\frac{(-1)^{M-1}}{q^{*}(M-1)!} \nonumber \\
&\left(\frac{\Gamma(p^{*}-q^{*})\Gamma(r^{*}-q^{*})}
{\Gamma\!\left(\omega-q^{*}+\frac12\right)}
\right)^M 
\left(\ln \left(\Phi^M\right)\right)^{M-1} \Phi^{Mq^{*}}.
\end{align}
By substituting this result into \eqref{up_bound_ir}, while using the fact that $\Phi\to 0^{+}$ and hence $(-1)^{M-1}(\ln\Phi^M)^{M-1}=M^{M-1}(-\ln\Phi)^{M-1} >0$, the high-SNR asymptotic behavior of the HARQ-IR upper bound is finally obtained as \eqref{high-snr-ir}.

\vspace{-0.1cm}
\section*{Appendix F}
\vspace{-0.1cm}
\label{int_1}
In order to validate \eqref{eq:pdf} as a PDF, we verify that the area under the entire curve is  equal to 1. 
Accordingly by using \cite[07.34.21.0009.0]{wolf}, we get 
\begin{align}
   &\int_0^{\infty} f_{h}(h) \mathrm{d}h = \frac{\sqrt{\omega}\alpha\beta}{h_l A_0 \Gamma(\alpha) \Gamma(\beta)} \nonumber\\
      & \times \int_0^{\infty}\mathrm{G}_{1,3}^{3,0}\!\left(\!\alpha \beta \frac{h}{A_0 h_l} \! \mathrel{\Bigg|}  
    \begin{array}{c}
	   \!\!\! \omega - \frac{1}{2} \\ \!\!\! \alpha - 1,\, \beta - 1,\, \omega-1
	\end{array}\!\!\!\right) \mathrm{d}h \nonumber \\
    & = \frac{\sqrt{\omega}\alpha\beta}{h_l A_0 \Gamma(\alpha) \Gamma(\beta)}\frac{A_0 h_l\Gamma(\omega)\Gamma(\alpha) \Gamma(\beta)}{ \alpha 
    \beta \Gamma(\omega+ \frac{1}{2})} = 1,
\end{align}
where we exploit the following Stirling approximation $\frac{\Gamma(\omega)}{\Gamma(\omega+\frac{1}{2})} \approx \frac{1}{\sqrt{\omega}}$ for $\omega \gg 1$, as presented also in \eqref{stirling_approx}. Along with the fact that $f_h(h)\geq 0$, we arrive at the result.








\bibliographystyle{IEEEtran}
\bibliography{IEEEabrv,References}

\begin{thebibliography}{10}
\providecommand{\url}[1]{#1}
\csname url@samestyle\endcsname
\providecommand{\newblock}{\relax}
\providecommand{\bibinfo}[2]{#2}
\providecommand{\BIBentrySTDinterwordspacing}{\spaceskip=0pt\relax}
\providecommand{\BIBentryALTinterwordstretchfactor}{4}
\providecommand{\BIBentryALTinterwordspacing}{\spaceskip=\fontdimen2\font plus
\BIBentryALTinterwordstretchfactor\fontdimen3\font minus
  \fontdimen4\font\relax}
\providecommand{\BIBforeignlanguage}[2]{{%
\expandafter\ifx\csname l@#1\endcsname\relax
\typeout{** WARNING: IEEEtran.bst: No hyphenation pattern has been}%
\typeout{** loaded for the language `#1'. Using the pattern for}%
\typeout{** the default language instead.}%
\else
\language=\csname l@#1\endcsname
\fi
#2}}
\providecommand{\BIBdecl}{\relax}
\BIBdecl

\bibitem{FSO_theory}
M.~A. Khalighi and M.~Uysal, ``Survey on free space optical communication: A
  communication theory perspective,'' \emph{IEEE Commun. Surveys Tuts.},
  vol.~16, no.~4, pp. 2231--2258, 4th Quart. 2014.

\bibitem{FSO_theory_2}
S.~A. Al-Gailani, M.~F. Mohd~Salleh, A.~A. Salem, R.~Q. Shaddad, U.~U. Sheikh,
  N.~A. Algeelani, and T.~A. Almohamad, ``A survey of free space optics ({FSO})
  communication systems, links, and networks,'' \emph{IEEE Access}, vol.~9, pp.
  7353--7373, Dec. 2020.

\bibitem{Haas_Survey}
A.~Krishnamoorthy~\textit{et al.}, ``Optical wireless communications: Enabling
  the next-generation network of networks,'' \emph{IEEE Veh. Technol. Mag.},
  vol.~20, no.~2, pp. 20--39, Jun. 2025.

\bibitem{FSO_app}
W.~Jiang, B.~Han, M.~A. Habibi, and H.~D. Schotten, ``The road towards {6G}:
  {A} comprehensive survey,'' \emph{IEEE Open J. Commun. Soc.}, vol.~2, pp.
  334--366, Feb. 2021.

\bibitem{nature_1}
Q.~Wang, E.~Rogers, B.~Gholipour, C.-M. Wang, G.~Yuan, J.~Teng, and
  N.~Zheludev, ``Optically reconfigurable metasurfaces and photonic devices
  based on phase change materials,'' \emph{Nature Photonics}, vol.~10, 12 2015.

\bibitem{nature_2}
Y.~Zhao, Z.~Liu, C.~Li, W.~Jiao, S.~Jiang, X.~Li, J.~Duan, and J.~Li,
  ``Mechanically reconfigurable metasurfaces: fabrications and applications,''
  \emph{npj Nanophotonics}, vol.~1, 06 2024.

\bibitem{schober_survey}
V.~Jamali, H.~Ajam, M.~Najafi, B.~Schmauss, R.~Schober, and H.~V. Poor,
  ``Intelligent reflecting surface assisted free-space optical
  communications,'' \emph{IEEE Commun. Mag.}, vol.~59, no.~10, pp. 57--63, Oct.
  2021.

\bibitem{Magaz}
H.~Wang, Z.~Zhang, B.~Zhu, J.~Dang, and L.~Wu, ``Optical reconfigurable
  intelligent surfaces aided optical wireless communications: Opportunities,
  challenges, and trends,'' \emph{IEEE Wireless Commun.}, vol.~30, no.~5, pp.
  28--35, Oct. 2023.

\bibitem{roadmap}
A.~I. Kuznetsov~et al., ``Roadmap for optical metasurfaces,'' \emph{ACS
  Photonics}, vol.~11, no.~3, pp. 816--865, 2024, pMID: 38550347.

\bibitem{Optical_adaptive}
C.~Hail, A.-K. Michel, D.~Poulikakos, and H.~Eghlidi, ``Optical metasurfaces:
  Evolving from passive to adaptive,'' \emph{Advanced Optical Materials},
  vol.~7, 05 2019.

\bibitem{elec_tun}
\BIBentryALTinterwordspacing
F.~Ding, C.~Meng, and S.~I. Bozhevolnyi, ``{Electrically tunable optical
  metasurfaces},'' \emph{Photonics Insights}, vol.~3, no.~3, p. R07, 2024.
  [Online]. Available: \url{https://doi.org/10.3788/PI.2024.R07}
\BIBentrySTDinterwordspacing

\bibitem{beam_splitter}
\BIBentryALTinterwordspacing
T.~Tian, Y.~Liao, X.~Feng, K.~Cui, F.~Liu, W.~Zhang, and Y.~Huang,
  ``Metasurface-based free-space multi-port beam splitter with arbitrary power
  ratio,'' \emph{Advanced Optical Materials}, vol.~11, no.~20, p. 2300664,
  2023. [Online]. Available:
  \url{https://advanced.onlinelibrary.wiley.com/doi/abs/10.1002/adom.202300664}
\BIBentrySTDinterwordspacing

\bibitem{Alou_ORIS}
\BIBentryALTinterwordspacing
L.~Yang, W.~Guo, D.~B. da~Costa, and M.~Alouini, ``Free-space optical
  communication with reconfigurable intelligent surfaces,'' 2020. [Online].
  Available: \url{https://arxiv.org/abs/2012.00547}
\BIBentrySTDinterwordspacing

\bibitem{Unified}
V.~K. Chapala and S.~M. Zafaruddin, ``Unified performance analysis of
  reconfigurable intelligent surface empowered free-space optical
  communications,'' \emph{IEEE Trans. Commun.}, vol.~70, no.~4, pp. 2575--2592,
  2022.

\bibitem{Dobre_Haas}
A.~R. Ndjiongue, T.~M.~N. Ngatched, O.~A. Dobre, A.~G. Armada, and H.~Haas,
  ``Analysis of {RIS}-based terrestrial-{FSO} link over {G-G} turbulence with
  distance and jitter ratios,'' \emph{J. Lightw. Tech.}, vol.~39, no.~21, pp.
  6746--6758, Nov. 2021.

\bibitem{photonics_1}
T.~Ishida, C.~B. Naila, H.~Okada, and M.~Katayama, ``Performance analysis of
  {IRS}-assisted multi-link {FSO} system under pointing errors,'' \emph{IEEE
  Photonics J.}, vol.~16, no.~4, pp. 1--10, Aug. 2024.

\bibitem{boulo_fso}
A.-A.~A. Boulogeorgos, N.~D. Chatzidiamantis, H.~G. Sandalidis, A.~Alexiou, and
  M.~D. Renzo, ``Cascaded composite turbulence and misalignment: Statistical
  characterization and applications to reconfigurable intelligent
  surface-empowered wireless systems,'' \emph{IEEE Trans. Veh. Technol.},
  vol.~71, no.~4, pp. 3821--3836, Apr. 2022.

\bibitem{Najafi_new}
M.~Najafi, B.~Schmauss, and R.~Schober, ``Intelligent reflecting surfaces for
  free space optical communication systems,'' \emph{IEEE Trans. Commun.},
  vol.~69, no.~9, pp. 6134--6151, Sep. 2021.

\bibitem{optics_express}
B.~Chen and X.~Zhu, ``Outage-guaranteed transmission for {IRS}-assisted {FSO}
  systems,'' \emph{Opt. Express}, vol.~32, no.~14, pp. 25\,420--25\,434, Jul.
  2024.

\bibitem{ORIS-NOMA}
G.~D. Chondrogiannis, A.~P. Chrysologou, A.-A.~A. Boulogeorgos, N.~D.
  Chatzidiamantis, and H.~Haas, ``Optical ris-enabled multiple access
  communications,'' \emph{IEEE Transactions on Green Communications and
  Networking}, vol.~10, pp. 1068--1080, 2026.

\bibitem{HARQ_FSO}
E.~Zedini, A.~Chelli, and M.-S. Alouini, ``On the performance analysis of
  hybrid {ARQ} with incremental redundancy and with code combining over
  free-space optical channels with pointing errors,'' \emph{IEEE Photonics
  Journal}, vol.~6, no.~4, pp. 1--18, 2014.

\bibitem{improv}
A.~Touati, M.~O. Hasna, and F.~Touati, ``{HARQ} performance over {FSO} channels
  with atmospheric fading and pointing errors,'' in \emph{2018 14th
  International Wireless Communications \& Mobile Computing Conference
  (IWCMC)}, 2018, pp. 158--163.

\bibitem{Power-optimal_HARQ}
G.~D. Chondrogiannis, N.~A. Mitsiou, N.~D. Chatzidiamantis, A.-A.~A.
  Boulogeorgos, and G.~K. Karagiannidis, ``Power-optimal {HARQ} protocol for
  reliable free space optical communication,'' in \emph{2023 IEEE International
  Conference on Communications Workshops (ICC Workshops)}, 2023, pp.
  1765--1770.

\bibitem{harq-fso-vehicular}
C.~T. Nguyen, H.~D. Le, C.~T. Nguyen, and A.~T. Pham, ``Toward practical
  {HARQ}-based {RC-LDPC} design for optical satellite-assisted vehicular
  networks,'' \emph{IEEE Transactions on Aerospace and Electronic Systems},
  vol.~60, no.~6, pp. 8619--8634, 2024.

\bibitem{HARQ_FSO_UAV}
K.~D. Dang, H.~D. Le, C.~T. Nguyen, and A.~T. Pham, ``Cooperative {HARQ}-aided
  multiple {UAV}s in optical aerospace backhaul networks,'' \emph{IEEE Access},
  vol.~11, pp. 138\,247--138\,260, 2023.

\bibitem{Gradshteyn2014}
I.~S. Gradshteyn and I.~M. Ryzhik, \emph{Table of integrals, series, and
  products}.\hskip 1em plus 0.5em minus 0.4em\relax Academic press, 2014.

\bibitem{H-function_Mathai}
A.~Mathai, R.~Saxena, and H.~Haubold, \emph{The H-function: Theory and
  Applications}, 01 2009.

\bibitem{Hralinovic}
A.~A. Farid and S.~Hranilovic, ``Outage capacity optimization for free-space
  optical links with pointing errors,'' \emph{J. Lightw. Technol.}, vol.~25,
  no.~7, pp. 1702--1710, Jul. 2007.

\bibitem{Kim}
M.~C.~A. Naboulsi, H.~Sizun, and F.~de~Fornel, ``Fog attenuation prediction for
  optical and infrared waves,'' \emph{Opt. Eng.}, vol.~43, no.~2, pp. 319 --
  329, Feb. 2004.

\bibitem{Najafi_conf}
M.~Najafi and R.~Schober, ``Intelligent reflecting surfaces for free space
  optical communications,'' in \emph{Proc. IEEE Global Commun. Conf.
  (GLOBECOM)}, Waikoloa, HI, USA, Dec. 2019, pp. 1--7.

\bibitem{algorithm}
V.~S. Adamchik and O.~I. Marichev, ``The algorithm for calculating integrals of
  hypergeometric type functions and its realization in reduce system,'' in
  \emph{Proc. Int. Symp. Symbolic Algebr. Comput. (ISSAC)}, New York, NY, USA,
  1990, p. 212–224.

\bibitem{Diamantoulakis}
M.~Najafi, V.~Jamali, P.~D. Diamantoulakis, G.~K. Karagiannidis, and
  R.~Schober, ``Non-orthogonal multiple access for {FSO} backhauling,'' in
  \emph{Proc. IEEE Wireless Commun. Netw. Conf. (WCNC)}, Barcelona, Spain, Apr.
  2018, pp. 1--6.

\bibitem{E_D}
A.~Chelli, E.~Zedini, M.-S. Alouini, M.~Pätzold, and I.~Balasingham,
  ``Throughput and delay analysis of {HARQ} with code combining over double
  rayleigh fading channels,'' \emph{IEEE Transactions on Vehicular Technology},
  vol.~67, no.~5, pp. 4233--4247, 2018.

\bibitem{Integral_of_Fox}
W.~Ameen~Alathwary and E.~S. Altubaishi, ``An integral of {F}ox’s
  {H}-{F}unctions with application to the performance of hybrid {FSO/RF}
  systems over generalized fading channels,'' \emph{IEEE Open J. Commun. Soc.},
  vol.~6, pp. 1030--1041, 2025.

\bibitem{KS}
F.~J. Massey, ``The kolmogorov-smirnov test for goodness of fit,''
  \emph{Journal of the American Statistical Association}, vol.~46, no. 253, pp.
  68--78, 1951.

\bibitem{Luke}
Y.~Luke, \emph{The Special Functions and Their Approximations}, ser.
  Mathematics in Science and Engineering.\hskip 1em plus 0.5em minus
  0.4em\relax Academic Press, 1969.

\bibitem{H-transforms}
A.~Kilbas and M.~Saigo, \emph{{H}-transforms: {T}heory and applications}, 2004.

\bibitem{wolf}
\emph{“The Mathematical Functions Site.”}, 2022. [Online]. Available:
  https://functions.wolfram.com/.

\end{thebibliography}
\end{document}